\newtheorem{theorem}{Theorem}[section]
\newtheorem*{theorem*}{Theorem}
\newtheorem{lemma}[theorem]{Lemma}
\let\realbibitem=\bibitem
\def\bibitem{\par \vspace{-1.2ex}\realbibitem}
\title{The Computational Complexity of Finding Hamiltonian Cycles in Grid Graphs of Semiregular Tessellations}
\author{Kaiying Hou\thanks{Phillips Academy Andover, {\tt khou@andover.edu}}
        \and
        Jayson Lynch\thanks{MIT Computer Science and Artificial Intelligence Laboratory, {\tt  jaysonl@mit.edu}}}
\begin{document}

\maketitle

\begin{abstract}

Finding Hamitonian Cycles in square grid graphs is a well studied and important questions. More recent work has extended these results to triangular and hexagonal grids, as well as further restricted versions\cite{HCPsquare,HCPtrihex,TRVBreduction}. In this paper, we examine a class of more complex grids, as well as looking at the problem with restricted types of paths. We investigate the hardness of Hamiltonian cycle problem in grid graphs of semiregular tessellations. We give NP-hardness reductions for finding Hamiltonian paths in grid graphs based on all eight of the semiregular tessilations. Next, we investigate variations on the problem of finding Hamiltonian Paths in grid graphs when the path is forced to turn at every vertex. We give a polynomial time algorithm for deciding if a square grid graph admits a Hamiltonian cycle which turns at every vertex. We then show deciding if cubic grid graphs, even if the height is restricted to $2$, admit a Hamiltonian cycle is NP-complete.

 \end{abstract}

\section{Introduction}
The Hamiltonian cycle problem (HCP) in grid graphs has been well studied and has led to application in numerous NP-hardness proofs for problems such as the milling problem\cite{arkin1993lawnmower}, Pac-Man\cite{viglietta2014gaming}, finding optimal solutions to a Rubix Cube\cite{demaine2017solving}, and routing in wireless mesh networks\cite{6314243}. The problem has been of interest to computer scientists for many years and recently a number of variations on the problem have been investigated. A 1982 paper proved that the HCP in square grid is NP-complete by reducing from the HCP in planar max degree 3 bipartite graphs\cite{HCPsquare}. More recently, a paper published in 2008 proved that the HCPs in triangular and hexagonal grid are NP-complete by the same reduction\cite{HCPtrihex}. In June of 2017, a new paper proved that HCP in hexagonal thin grid graph is NP-complete by reducing from 6-Regular Tree-Residue Vertex Breaking problem\cite{TRVBreduction}. These papers also show results on grid graphs with restrictions such as thin, polygonal, and solid. With all the interest in the computational complexity of the HCP in grid graphs, it is reasonable to ask whether we can generalize or adapt these results to different types of grids. In addition, we investigate the notion of angle-restricted tours, studied in \cite{fekete1997angle}, in the context of grid graphs. We give both algorithms and hardness proofs for finding Hamiltonian paths with this `always turning' constraint.
 
Although the hardness of the HCP in semiregular grids seems like an abstract question, it has many possible applications. Grids are natural structures that things may be formatted into. For example, the layout of buildings or modular structures used in space may form a network that follow the patterns of semiregular, or more general, grids. If certain locations in such net work need to be visited for maintenance, and one wants an optimal rout, then this is well modeled by the Hamiltonian path problem. Our reductions both give insight into what sorts of regular structures will be difficult to find optimal paths for, as well as ways of potentially transferring other efficient algorithms to these new problems. Finally, the results and techniques in this paper may be useful in proving hardness of other problems by reducing from HCPs in semiregular grids. 
 
The always turning Hamiltonian path problems also has some relation to more concrete questions. First, one can see the always turning constraint as path planning in world with reflections at fixed angles and locations. One may be routing optics to various locations on an optics table. Reflections of 45 degrees in a grid based world are also a common element in puzzles and games. In addition, there has been study of problems which try to minimize the number of turns taken in a covering tour\cite{doi:10.1137/S0097539703434267}. In many ways this can be seen as the opposite, modeling a case where turning is significantly easier than continuing straight.

\paragraph*{Results}
In Section~\ref{sec:semireg} we extend the class of grid graphs studied to those based on semiregular tessellations. There are a total of eight semiregular tessellations\cite{naturalstructure}, which are shown in Figure~\ref{fig:semi}. For all eight semiregular tessellations we show the corresponding Hamiltonian cycle problem in the induced grid graph is NP-complete. We show hardness by reducing from three NP-complete problems: HCP in planar max degree 3 bipartite graphs\cite{HCPsquare}, HCP in hexagonal grids\cite{HCPtrihex}, and Tree-Residue Vertex Breaking problem\cite{TRVBreduction}.

In Section~\ref{sec:turning} we examine the question of Hamiltonian paths which turn at every vertex. We show this problem is  hard in 3D square grid graphs. We also show it is easy in triangular grids with only 60 degree turns but hard in triangular grids when 120 degree turns are allowed. Finally, we examine a problem in square grids where a path must visit every vertex at least once, must turn at every vertex, and cannot reuse edges. We give a linear time algorithm for solving this double turning problem in solid square grid graphs.

\section{Definitions}

\begin{figure*}[ht!]
\label{fig:semireg}
\centering
\includegraphics[width=130mm]{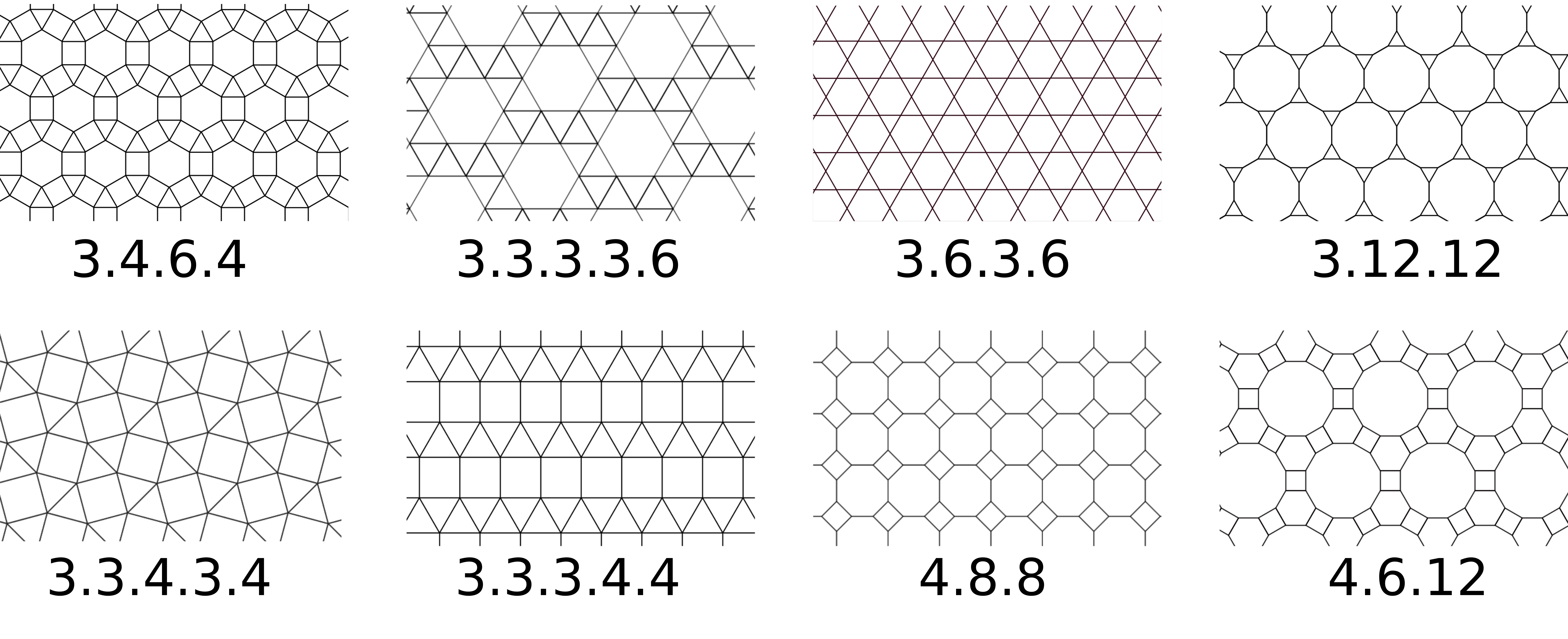}
\caption{The eight semiregular tessellations. It is common to refer to them by the size of the faces while walking around a vertex. }
\label{fig:semi}
\end{figure*}

A \emph{tessellation} is a tiling of a plane with polygons without overlapping. A \emph{semiregular tessellation} is a tessellation which is formed by two or more kinds regular polygons of side length 1 and in which the corners of polygons are identically arranged. Figure~\ref{fig:semi} depicts part of each of the eight semiregular tessellations.

 An infinite lattice of a semiregular tessellation is a lattice formed by taking the vertices of the regular polygons in the tessellation as the points of the lattice. A graph $G$ is \emph{induced} by the point set $S$ if the vertices of $G$ are the points in $S$ and its edges connect vertices that are distance 1 apart. A \emph{grid graph} of a semiregular tessellation, or a \emph{semiregular grid}, is a graph induced by a subset of the infinite lattice formed by that tessellation. Call the infinite graph induced by the full lattice a \emph{full grid}.\\

A \emph{pixel} is the simple cycle bounding a face in a grid graph which contains the same bounding edges and vertices as the corresponding face in the full grid. Thus a pixel can be thought of as a cycle in a graph which bounds precisely one tile in the original tessellation. We may use pixel interchangeably to refer to the bounding cycle, the face bound by the cycle, or the set of vertices around that face.
A \emph{solid grid graph} is one in which every bounded face is a pixel. 
 
 A \emph{Hamiltonian cycle} is a cycle that passes through each vertex of a graph exactly once. The \emph{Hamiltonian cycle problem}, sometimes abbreviated as HCP, asks that given a graph, whether or not that graph admits a Hamiltonian cycle. The HCP in a semiregular tessellation asks, given a grid graph of that tessellation, whether it admits a Hamiltonian cycle.

\section{Finding Hamiltonian Paths in Semi-Regular Tessellations is NP-Complete}
\label{sec:semireg}
This section shows the NP-completeness of HCPs in all eight semiregular tessellations. There are three NP-complete problems that we reduce from: the HCP in hexagonal grid, the HCP in planar max degree 3 bipartite graphs, and the tree-residue vertex-breaking (TRVB) problem. This section is divided into three subsections, each of which introduces one of the three NP-complete problems and includes the hardness proofs that reduce from that problem. \\

\subsection{HCPs that Reduce from the HCP in Hexagonal Grid}
\label{sec:HexGrid}
This section proves that the HCPs in the 3.4.6.4 tessellation, 3.3.3.3.6 tessellation, 3.6.3.6 tessellation, and 3.12.12 tessellation are NP-complete by reducing from the HCP in hexagonal grids, which is proven to be NP-complete\cite{HCPtrihex}. The reduction works in the following way: for any given Hexagonal grid graph $G'$, which is sometimes referred to as the original graph, we can construct a simulated grid graph $G$ of the target tessellation that has a Hamiltonian cycle if and only if $G'$ has a Hamiltonian cycle. The grid graph $G$ is constructed by using gadgets to represent vertices and edges of the original graph $G'$.

\subsubsection{3.4.6.4 Tessellation}
\label{sec:3464}
\begin{figure}[H]
\centering
\includegraphics[width=70mm]{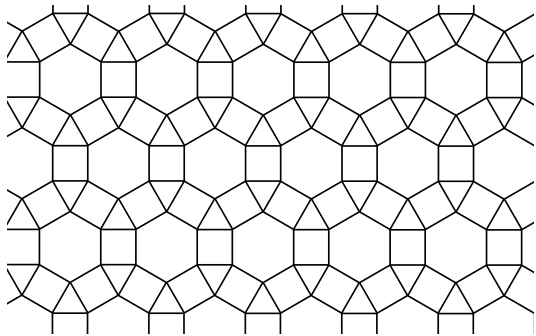}\\
\caption{3.4.6.4 Tessellation}
\end{figure}
\begin{theorem}
\label{thm:3464}
The HCP in grid graphs of the 3.4.6.4 tessellation is NP-complete.\\
\end{theorem}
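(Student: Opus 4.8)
The plan is to prove NP-hardness by reducing from the HCP in hexagonal grid graphs, following the gadget-based scheme described at the start of Section~\ref{sec:HexGrid}. Given a hexagonal grid graph $G'$, I would build a grid graph $G$ of the $3.4.6.4$ tessellation by replacing each vertex of $G'$ with a \emph{vertex gadget} and each edge with an \emph{edge gadget}, and then argue that $G$ admits a Hamiltonian cycle if and only if $G'$ does. Membership in NP is immediate, since a Hamiltonian cycle is a polynomial-size certificate checkable in polynomial time; the work is entirely in the hardness reduction.

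The structural idea I would exploit is that the hexagonal faces of the $3.4.6.4$ tessellation sit at the positions of the vertices of a hexagonal lattice, so each hexagon naturally serves as a vertex gadget, while the squares and triangles between hexagons carry the edge information. First I would design the vertex gadget: a cluster of tiles built around a hexagonal face whose internal vertices can only be covered by a Hamiltonian cycle in a small number of ways, each way corresponding to a choice of which pair (or triple) of incident edges the cycle enters and exits through. The key property to enforce is that the cycle must pass straight through the gadget connecting exactly two of its edge ports, mirroring how a Hamiltonian cycle in $G'$ uses exactly two edges at each degree-$\le 3$ vertex. Then I would design the edge gadget as a chain of tiles, forming a ``wire'' that the cycle must traverse in its entirety to cover all internal vertices, so that an edge of $G'$ is ``used'' exactly when the corresponding wire is traversed.

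The two directions of the equivalence would then follow by local case analysis. For the forward direction, given a Hamiltonian cycle in $G'$, I would build one in $G$ by routing through each vertex gadget according to the two incident used edges and traversing exactly the wires corresponding to used edges; I must check that the leftover internal vertices of unused-port regions are still covered, which is where the gadget geometry has to be chosen carefully. For the reverse direction, I would show that \emph{any} Hamiltonian cycle of $G$ must traverse each wire in an all-or-nothing fashion and must pass through each vertex gadget in one of the canonical ``two-port'' configurations, so that projecting back to $G'$ (mark an edge as used iff its wire is traversed) yields a valid Hamiltonian cycle; the degree and parity constraints forced by the internal vertices are what rule out degenerate traversals.

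The main obstacle I expect is the design and verification of the vertex gadget. The $3.4.6.4$ tessellation has a somewhat intricate local structure (triangles and squares alternating around each hexagon), so the gadget's internal vertices and their forced coverings must be analyzed exhaustively to guarantee that the only Hamiltonian traversals are the intended two-port ones, with no ``short-circuit'' that covers the gadget internally while ignoring its ports, and no traversal that uses the wrong number of ports. Getting the tilings to line up so that the hexagonal lattice adjacency of $G'$ is faithfully reproduced by adjacency of gadgets in $G$ --- and confirming the construction has polynomial size --- is the delicate part; the edge-gadget and the global assembly are comparatively routine once the vertex gadget is locked down.
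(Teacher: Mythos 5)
Your high-level plan matches the paper's: both reduce from the HCP in hexagonal grids by replacing vertices and edges of $G'$ with gadgets in the $3.4.6.4$ tessellation. But there is a genuine gap at the heart of your scheme, namely the stipulation that ``an edge of $G'$ is used exactly when the corresponding wire is traversed,'' with unused wires left untraversed. A Hamiltonian cycle in $G$ must visit \emph{every} vertex of $G$, including all internal vertices of edge gadgets corresponding to edges \emph{not} used by the cycle in $G'$. So in the forward direction your construction does not produce a Hamiltonian cycle at all: the vertices of unused wires are simply never covered, and no choice of vertex-gadget geometry can fix this, since those vertices can be arbitrarily far from any vertex gadget. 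The paper's key device, which your proposal is missing, is that every edge gadget is \emph{always} fully traversed, but in one of two distinguishable modes: a \emph{cross path}, which enters at one end and exits at the other (simulating a used edge), or a \emph{return path}, which enters and exits at the same end (simulating an unused edge while still covering all of the gadget's vertices). The used/unused distinction is thus encoded in \emph{how} the wire is traversed, not \emph{whether} it is traversed; your ``all-or-nothing'' framing of the reverse direction fails for the same reason.

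This dichotomy also changes what the vertex gadget must do, and here your design diverges from the paper's in a way that matters. Once return paths exist, the danger is that a return path could enter a vertex gadget and create spurious visits, so the projection back to $G'$ becomes ill-defined. The paper handles this by exploiting bipartiteness of the hexagonal grid: it uses tiny triangle ($K_3$) vertex gadgets (not large hexagon-based clusters), with \emph{different} gadgets for even and odd vertices, where the odd gadgets attach to edge gadgets through a single-edge ``pin'' connection that a return path cannot pass through. Consequently every odd vertex gadget meets exactly two cross paths, and bipartiteness then forces the correct global structure (even gadgets meet two cross paths, plus possibly a return path, which is harmless). Without the return-path mechanism and this pin-connection argument, neither direction of your equivalence goes through as stated.
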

\begin{proof}
We will reduce from the HCP in hexagonal grids. Given a hexagonal grid graph $G'$, we will construct a grid graph $G$ of the 3.4.6.4. tessellation in this way: for every edge in $G'$ we add the edge gadget shown in Figure~\ref{fig:3464edge} to $G$ and for every vertex in $G'$ we add the vertex gadgets shown in Figure~\ref{fig:3464vertex} to $G$. Since the 3.4.6.4 tesselation has scaled versions of the translational symmetries of the hexagonal grid, picking an embedding for our construction is easy. An example can be seen in Figure~\ref{fig:3464example}. Since the hexagonal grid $G'$ is bipartite, we can design different vertex gadgets for the even and odd vertices.\\

\begin{figure}[H]
    \centering
    \includegraphics[width=.5\textwidth]{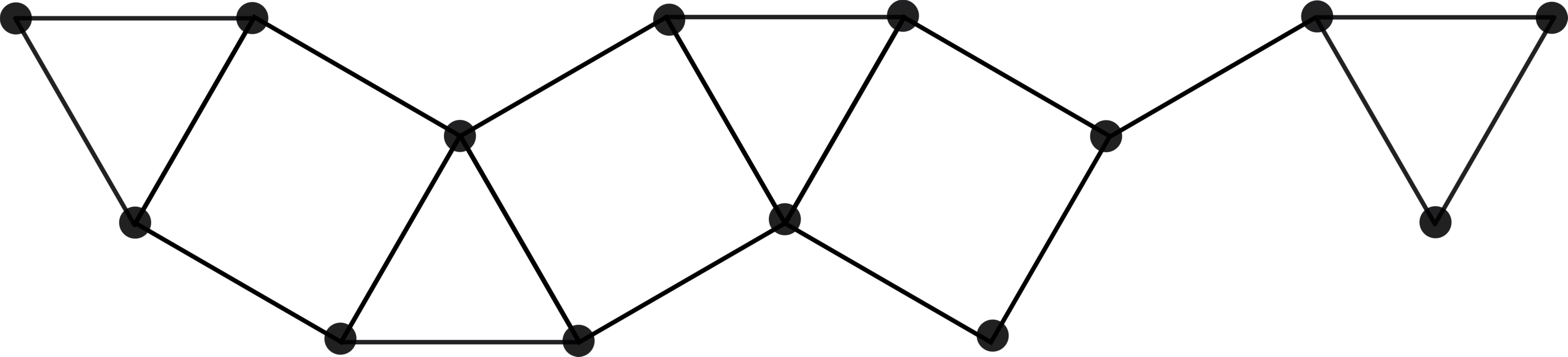}
    \caption{Edge gadget}
    \label{fig:3464edge}
\end{figure}
\begin{figure}[h]
    \centering
    \includegraphics[width=.6\textwidth]{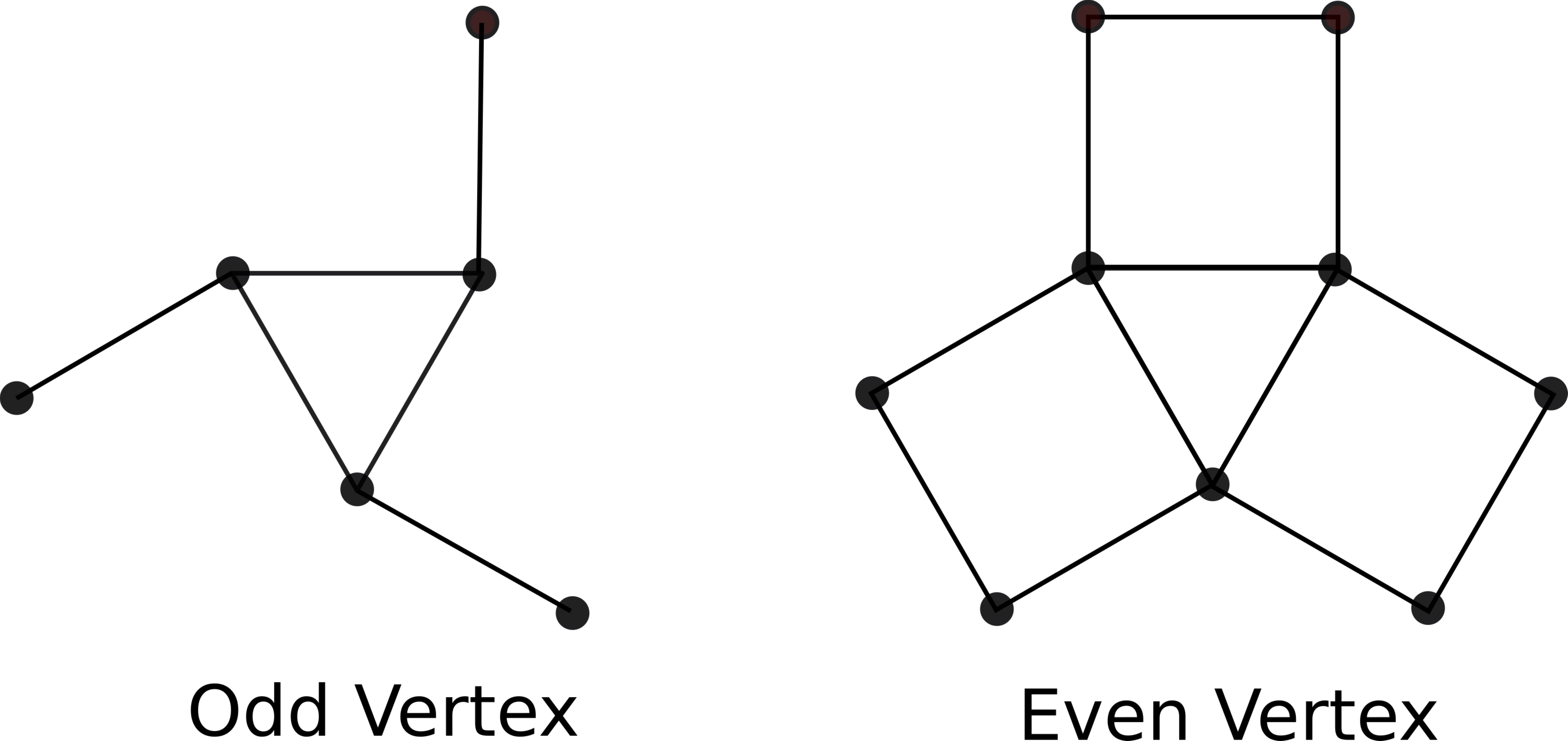}
    \caption{Vertex gadgets}
    \label{fig:3464vertex}
\end{figure}

\begin{figure}[H]
\centering
\includegraphics[width=75mm]{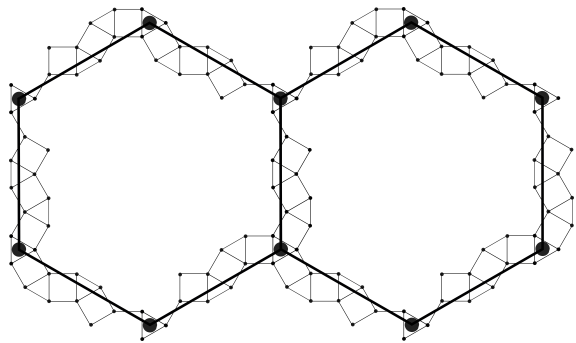}\\
\caption{A simulated graph}
\label{fig:3464example}
\end{figure}

Now, we will show that the original graph $G'$ has a Hamiltonian cycle $C'$ if and only if the simulated graph $G$ has a Hamiltonian cycle $C$. If the $G'$ has a Hamiltonian cycle $C'$, for any taken edge in it, we go through the corresponded edge gadget in $G$ with the cross path in Figure~\ref{fig:3464edgetake}; for any untaken edge, we go through the corresponded edge gadget with the return path. Because the simulated vertices in $G$ are triangles ($K_3$), there is always a path to take the simulated vertex by entering from one point and leaving at the other. Therefore, if there is a Hamiltonian cycle $C'$ in the original graph $G'$, then there is a Hamiltonian cycle $C$ in the simulated graph $G$.\\

\begin{figure}[H]
\centering
\includegraphics[width=80mm]{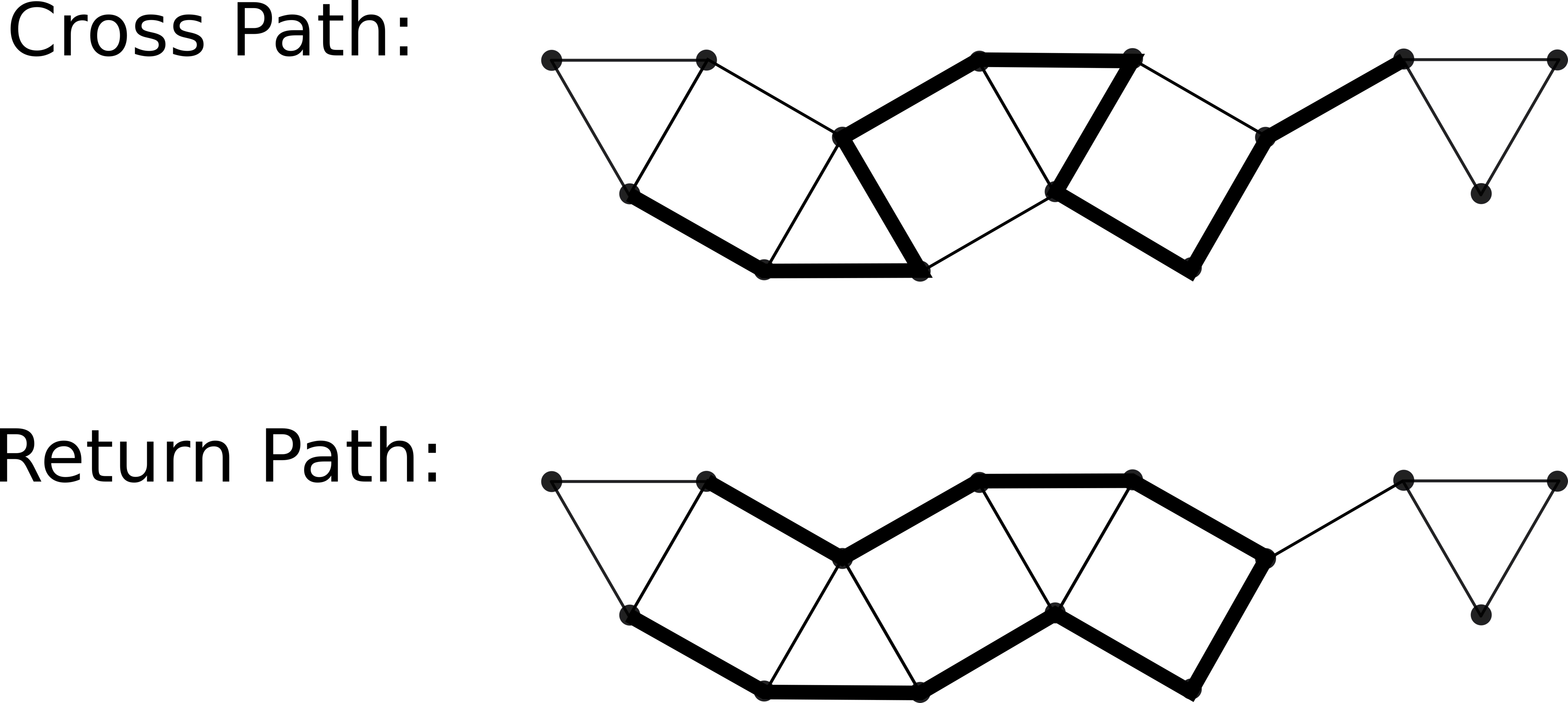}\\
\caption{Two kinds of paths in the 3.4.6.4 edge gadget}
\label{fig:3464edgetake}
\end{figure}

The essential difference between the cross path and the return path is that a cross path starts and finishes at different ends of an edge while the return path starts and finishes in the same end. Note that the return and cross paths are the only two paths which go through the edge gadget and visit all of its vertices. The odd vertex gadget is connected to the edge gadget through a single edge connection which prevents the return path from entering the odd vertex gadget. If a Hamiltonian cycle $C$ exists in the simulated graph $G$, it is not hard to see that each odd vertex gadget in $G$ must be connected to two cross paths and the even vertex gadgets can either be connected to two cross paths or two cross paths and a return path. Then, we can find a cycle $C'$ in the original graph $G'$ by making each edge gadget with a cross path in $C$ a taken edge in $C'$. Thus, if there is a cycle $C$ in the simulated graph $G$, there is a cycle $C'$ in the original graph $G'$. This way, we showed the original graph $G'$ has a Hamiltonian cycle $C'$ if and only if the simulated graph $G$ has a Hamiltonian cycle $C$. \\
\begin{figure}[H]
\centering
\includegraphics[width=80mm]{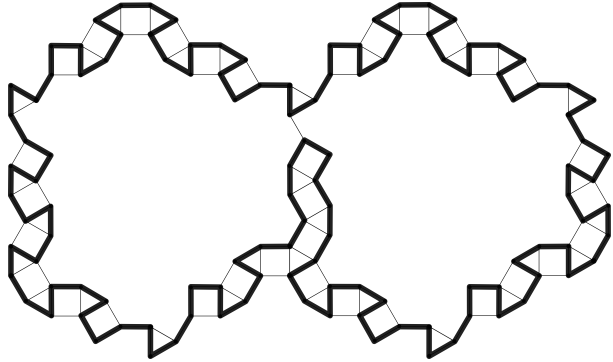}\\
\caption{Example of a reduction for the 3.4.6.4 tessellation}
\label{fig:3464pathexample}
\end{figure}
\end{proof}

\subsubsection{3.3.3.3.6 Tessellation}
\begin{figure}[h]
\centering
\includegraphics[width=70mm]{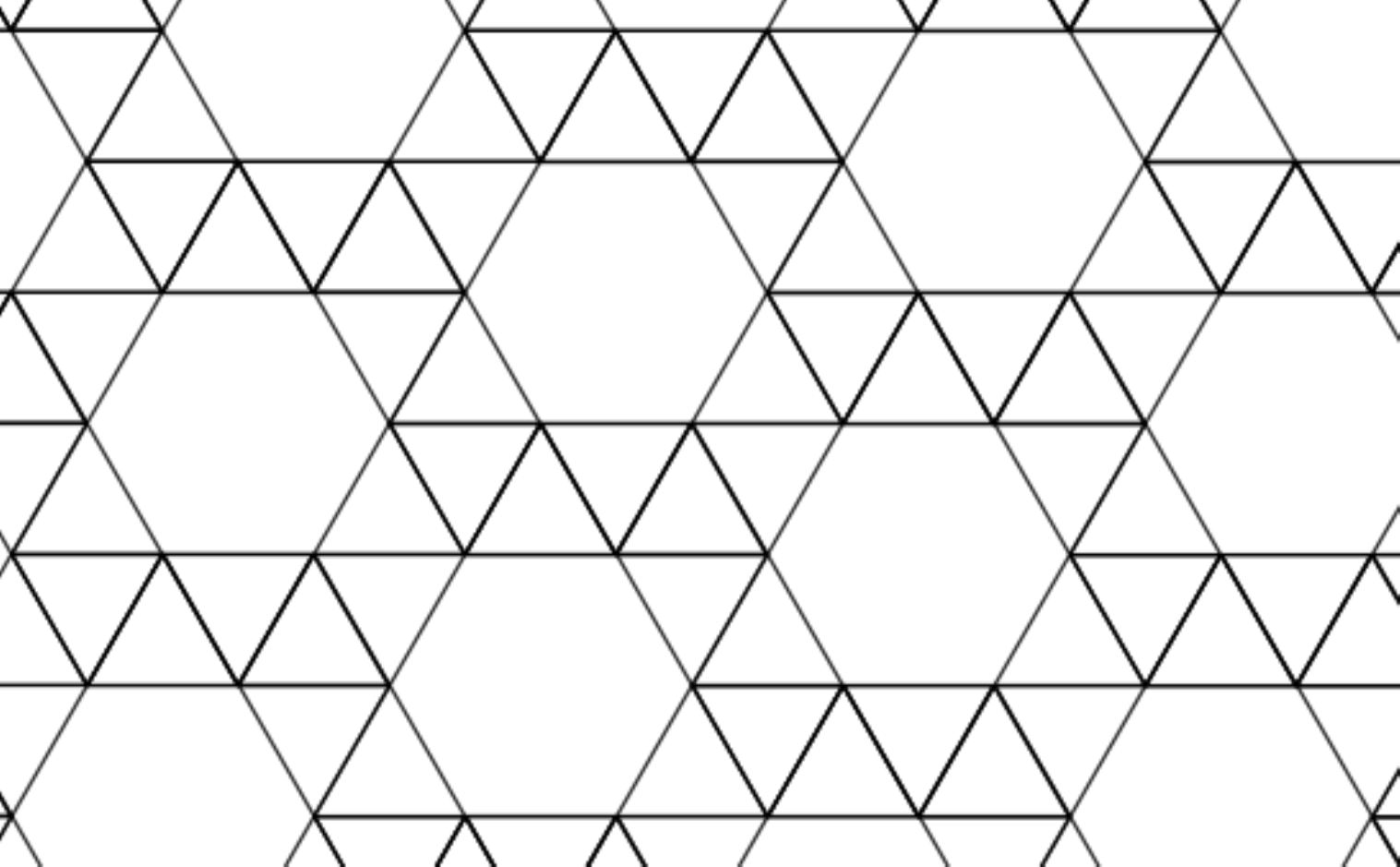}\\
\caption{3.3.3.3.6 Tessellation}
\end{figure}
\begin{theorem}
The HCP in the grid graphs of the 3.3.3.3.6 tessellation is NP-complete.\\
\end{theorem}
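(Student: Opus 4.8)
The plan is to reduce from the HCP in hexagonal grids, in exactly the style of Theorem~\ref{thm:3464}, replacing the 3.4.6.4 gadgets with gadgets drawn inside the 3.3.3.3.6 tessellation. Given a hexagonal grid graph $G'$, I would place a \emph{vertex gadget} at each vertex of $G'$ and an \emph{edge gadget} along each edge, choosing the embedding by using the fact that the 3.3.3.3.6 tessellation also carries a scaled copy of the translational symmetries of the hexagonal lattice, so that the degree-$3$ incidence pattern of $G'$ maps cleanly into the plane. As before, I would exploit the bipartiteness of the hexagonal grid to use two different vertex gadgets, one for the even class and one for the odd class.

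The heart of the construction is an edge gadget admitting exactly two full-coverage traversals: a \emph{cross path}, which enters at one end and leaves at the other (encoding a taken edge), and a \emph{return path}, which enters and exits at the same end (encoding an untaken edge). I would design the gadget so that these are provably the only two ways a Hamiltonian cycle can cover all of the gadget's internal vertices. The 3.3.3.3.6 tessellation is rich in triangles (four meet at every vertex), so I expect to realize the vertex gadgets as copies of $K_3$ or short chains of triangles, just as the 3.4.6.4 proof used $K_3$ vertices; a triangle can always be traversed by entering at one corner and leaving at another, which is what lets a cross path thread through a vertex gadget. Critically, I would attach the odd-class vertex gadget to each incident edge gadget through a \emph{single-edge} connection, so that a return path (which would have to re-enter the same endpoint) cannot pass into an odd vertex gadget; this forces every odd vertex gadget to be served by exactly two cross paths, which is the mechanism that propagates the taken/untaken bookkeeping consistently around the graph.

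With the gadgets in hand, the equivalence proof mirrors the previous theorem. For the forward direction, given a Hamiltonian cycle $C'$ in $G'$, I traverse each taken edge's gadget by its cross path, each untaken edge's gadget by its return path, and route through each vertex gadget, obtaining a Hamiltonian cycle $C$ of $G$. For the reverse direction, given a Hamiltonian cycle $C$ in $G$, I would argue that each edge gadget must be covered by either a cross or a return path, that the single-edge connections force each odd vertex gadget to meet exactly two cross paths and each even vertex gadget to meet two cross paths (possibly together with a return path), and then recover $C'$ by declaring taken precisely those edges whose gadget carries a cross path. Membership in NP is immediate, so this establishes NP-completeness.

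The main obstacle I anticipate is the gadget design forced by the geometry: every vertex of the 3.3.3.3.6 tessellation has degree $5$ and the tessellation is chiral (it admits no reflection symmetry), so the local neighborhoods available for building gadgets are less flexible than in the 3.4.6.4 case. I must verify by hand that the edge gadget has no unexpected Hamiltonian traversal beyond the cross and return paths, and that the single-edge connection genuinely blocks the return path at odd vertices. Carrying out this exhaustive case analysis, and confirming that the chosen gadgets actually tile together into a valid subgraph of the full grid under the hexagonal embedding, is where the real work lies.
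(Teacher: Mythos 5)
Your proposal takes essentially the same route as the paper: a reduction from the HCP in hexagonal grids using vertex and edge gadgets, with the cross-path/return-path dichotomy encoding taken/untaken edges, bipartiteness giving even/odd vertex gadgets, and the same if-and-only-if argument as in the 3.4.6.4 case. The only missing piece is the concrete gadget drawings (which the paper supplies in a figure and argues about only at the same level of detail you do), so your plan matches the published proof in structure and substance.
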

\begin{proof}
Similar to the 3.4.6.4 tessellation in Section~\ref{sec:3464}, the NP-completeness of the HCP tessellation can also be proven by reducing from the HCP in hexagonal grid. We use the gadgets shown in Figure~\ref{fig:33336gadgets} to simulate the vertices and edges of the hexagonal grid. Now, we can construct a simulated graph $G$ for any hexagonal grid $G'$. For example, the graph formed by two hexagons can be simulated by the grid in Figure~\ref{fig:33336example}.\\
\begin{figure}[h]
\centering
\includegraphics[width=100mm]{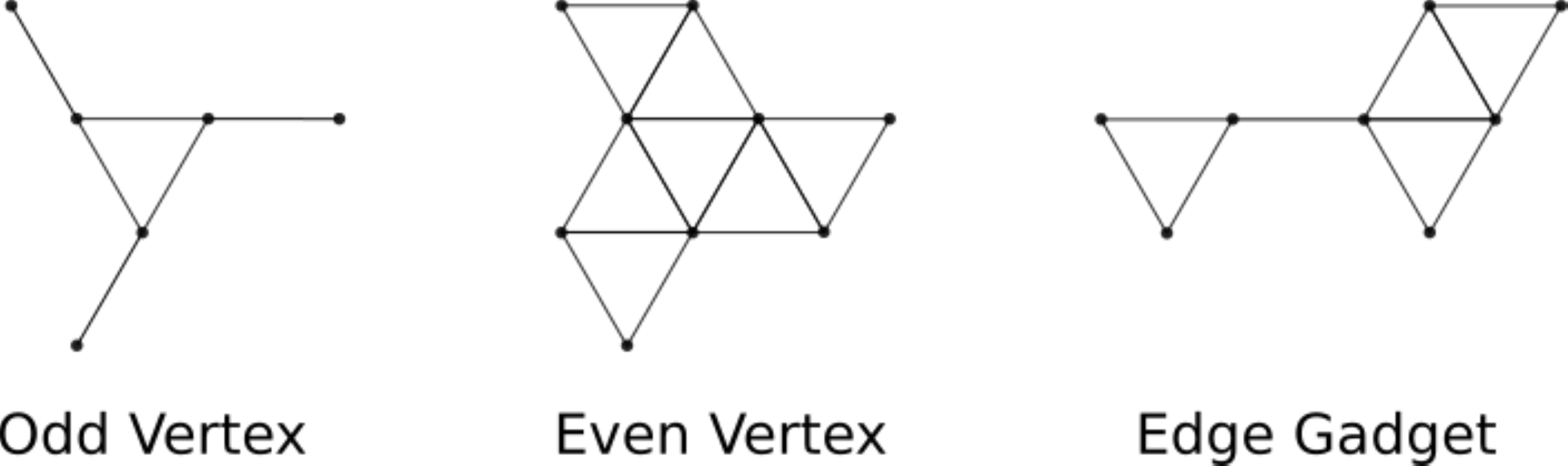}\\
\caption{Gadgets for the 3.3.3.3.4 reduction}
\label{fig:33336gadgets}
\end{figure}
\begin{figure}[H]
\centering
\includegraphics[width=60mm]{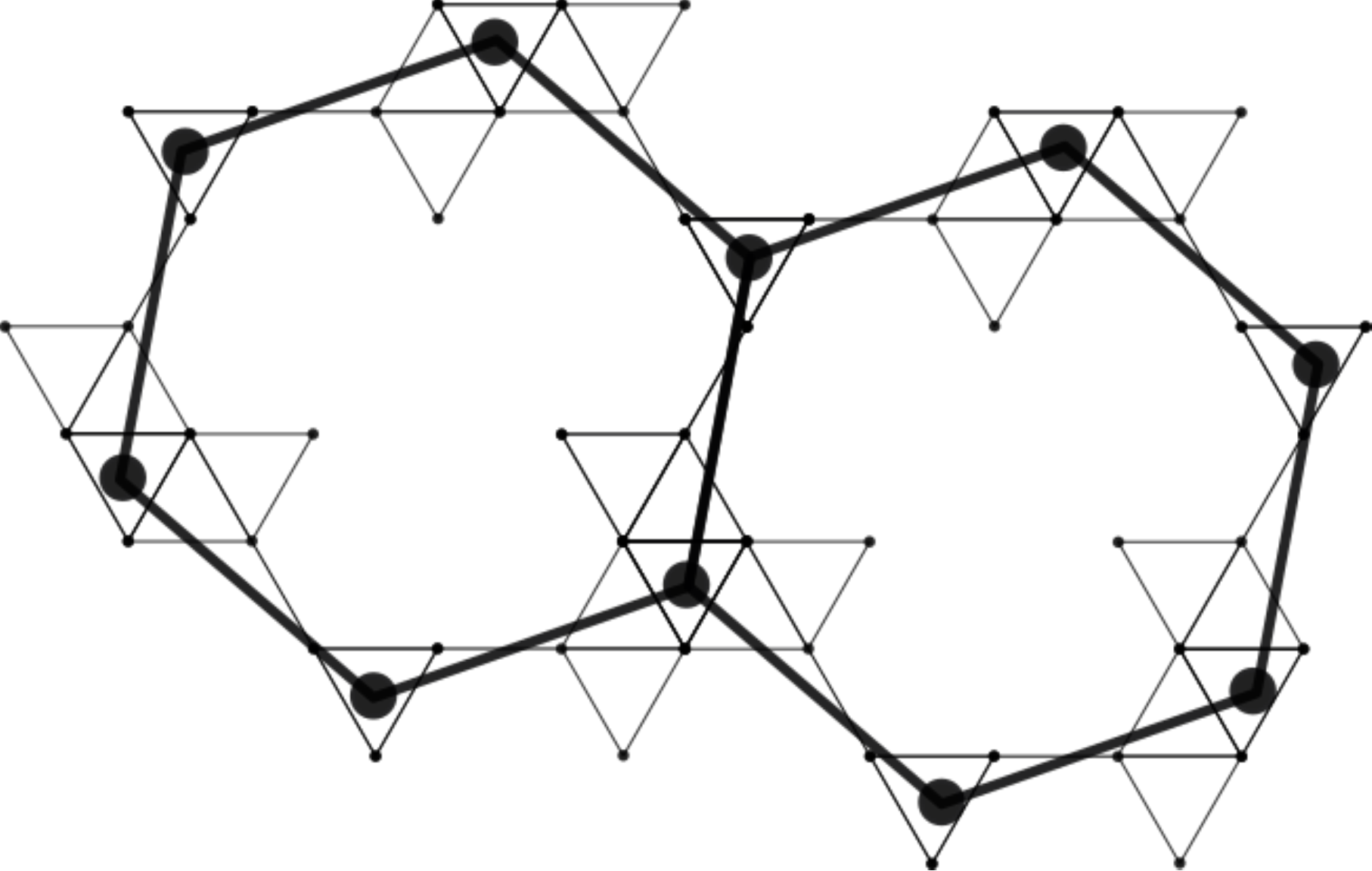}\\
\caption{An example of a simulated graph in the 3.3.3.3.4 tessellation}
\label{fig:33336example}
\end{figure}
Similar to the gadgets used in Section~\ref{sec:3464}, there are two kinds of traversals for the edge gadget: a cross path that goes from one end to the other end and the return path that begins and finishes on the same end. The following reasoning on why $G$ has a Hamiltonian cycle if and only $G'$ has a Hamiltonian cycle is identical to that of the previous section. If a hexagonal grid $G'$ has a Hamiltonian cycle, we can create a Hamiltonian cycle in $G$ by going through the edge gadgets of the taken edges with cross paths and and the edge gadgets of the untaken edges with return paths. If there is a Hamiltonian cycle in $G$, each vertex gadget of $G$ must be connected to exactly two cross paths, indicating that there exists a Hamiltonian cycle in $G$.\\
\end{proof}

\subsubsection{3.6.3.6 Tessellation}
\begin{figure}[h]
\centering
\includegraphics[width=70mm]{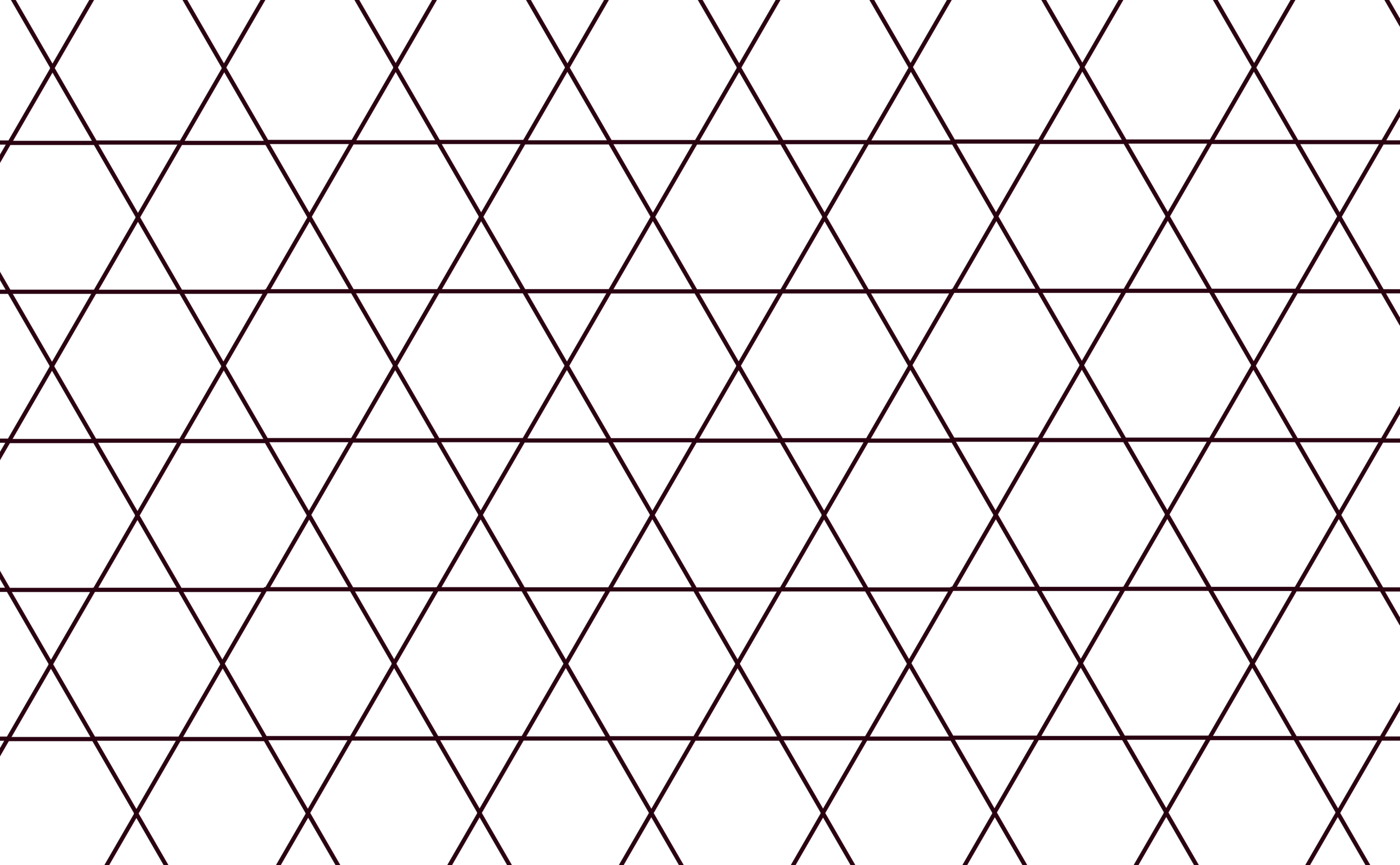}
\end{figure}
\begin{theorem}
The HCP in the grid graphs of the 3.6.3.6 tessellation is NP-complete.\\
\end{theorem}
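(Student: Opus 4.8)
The plan is to follow the same template as the two preceding reductions and reduce from the HCP in hexagonal grids \cite{HCPtrihex}. First I would observe that the 3.6.3.6 tessellation contains hexagonal faces arranged with exactly the translational symmetries of the hexagonal grid, so a scaled embedding of the original graph $G'$ is immediate: for each edge of $G'$ I place an edge gadget and for each vertex I place a vertex gadget, both built from the triangular and hexagonal faces of the 3.6.3.6 grid. An example construction would accompany the gadget figures, as in the previous subsections.

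As before, the design goal for the edge gadget is that it admit exactly two vertex-covering traversals---a \emph{cross path}, which enters one end and exits the other (a taken edge), and a \emph{return path}, which enters and exits the same end (an untaken edge)---and no others. The vertex gadgets must then force every vertex of $G'$ to be met by exactly two cross paths, reproducing the degree-$2$ condition of a Hamiltonian cycle. Since the hexagonal grid is bipartite, I would again allow distinct gadgets for the even and odd vertex classes, and use the analog of the single-edge connection from the 3.4.6.4 case to prevent a return path from improperly entering the more restrictive vertex gadget.

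The biconditional then follows exactly as in the previous subsections. Given a Hamiltonian cycle $C'$ in $G'$, I route cross paths through the gadgets of taken edges and return paths through the gadgets of untaken edges; since each vertex gadget can always be traversed by connecting its two incident cross paths, this yields a Hamiltonian cycle $C$ in $G$. Conversely, a Hamiltonian cycle $C$ in $G$ must traverse each edge gadget by a cross path or a return path, and the vertex-gadget constraints force exactly two cross paths at each simulated vertex; declaring the corresponding edges of $G'$ taken recovers a Hamiltonian cycle $C'$.

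The main obstacle will be the explicit gadget geometry rather than the biconditional argument. Because the 3.6.3.6 tessellation offers only triangles and hexagons, and every edge borders exactly one triangle and one hexagon, there is less freedom than in the square-containing 3.4.6.4 case. Verifying that the edge gadget has no spurious vertex-covering traversals, and that each hexagonal and triangular face can be consistently covered by the cycle, requires a careful case analysis of how a cycle may enter and leave each face. Getting the single-edge connection to correctly forbid the return path at the appropriate vertex class---so that the forward and reverse directions match up---is the delicate point I expect to occupy most of the verification.
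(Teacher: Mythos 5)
Your high-level plan is the same as the paper's (reduce from the HCP in hexagonal grids, with edge gadgets admitting cross/return traversals and vertex gadgets forcing exactly two cross paths per simulated vertex), but the proposal defers the entire gadget construction, and for a gadget-based hardness proof the gadgets \emph{are} the content. You acknowledge this yourself (``the main obstacle will be the explicit gadget geometry''), but acknowledging the gap does not fill it: without exhibiting concrete 3.6.3.6 gadgets and verifying their traversal properties, there is no proof, only a template that worked for two other tessellations and is conjectured to work here.

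Moreover, the template you describe does not match what the 3.6.3.6 geometry actually forces, and the mismatch hits exactly the step you wave through. The paper's 3.6.3.6 edge gadget does \emph{not} admit a single cross traversal plus a return; it admits \emph{two distinct kinds} of cross paths, which leave the even vertex gadget in different ways. Which kind of cross path is used dictates the turning direction at the even vertex: the upper cross path forces a clockwise turn (and so must meet another upper cross path), the lower forces a counter-clockwise turn. Consequently, your forward-direction claim that ``each vertex gadget can always be traversed by connecting its two incident cross paths'' is precisely what needs work: one must check that for every pair of incident edges at each vertex there exist \emph{compatible} kinds of cross paths, and that these per-edge choices can be made consistently around the whole cycle $C'$. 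The paper handles this by verifying that any pair of the three edges at an even (and odd) vertex can be connected by suitably chosen cross-path kinds. Your proposal also leans on an analog of the single-edge ``pin connection'' from the 3.4.6.4 case to block return paths at odd vertices, but in the paper's 3.6.3.6 construction the exclusion of bad traversals rests on the cross-path/turning compatibility analysis rather than on that mechanism, so even the qualitative shape of the verification you anticipate is off.
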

\begin{proof}
We prove that the HCP in this tessellation is NP-complete by reducing from HCP in hexagonal grid. Using the following vertex gadgets and edge gadget, shown in Figures~\ref{fig:6363vertex} and Figure~\ref{fig:6363edge}, for any hexagonal grid $G'$ we can construct a simulated graph $G$ in the tessellation.\\
\begin{figure}[h]
\centering
\includegraphics[width=80mm]{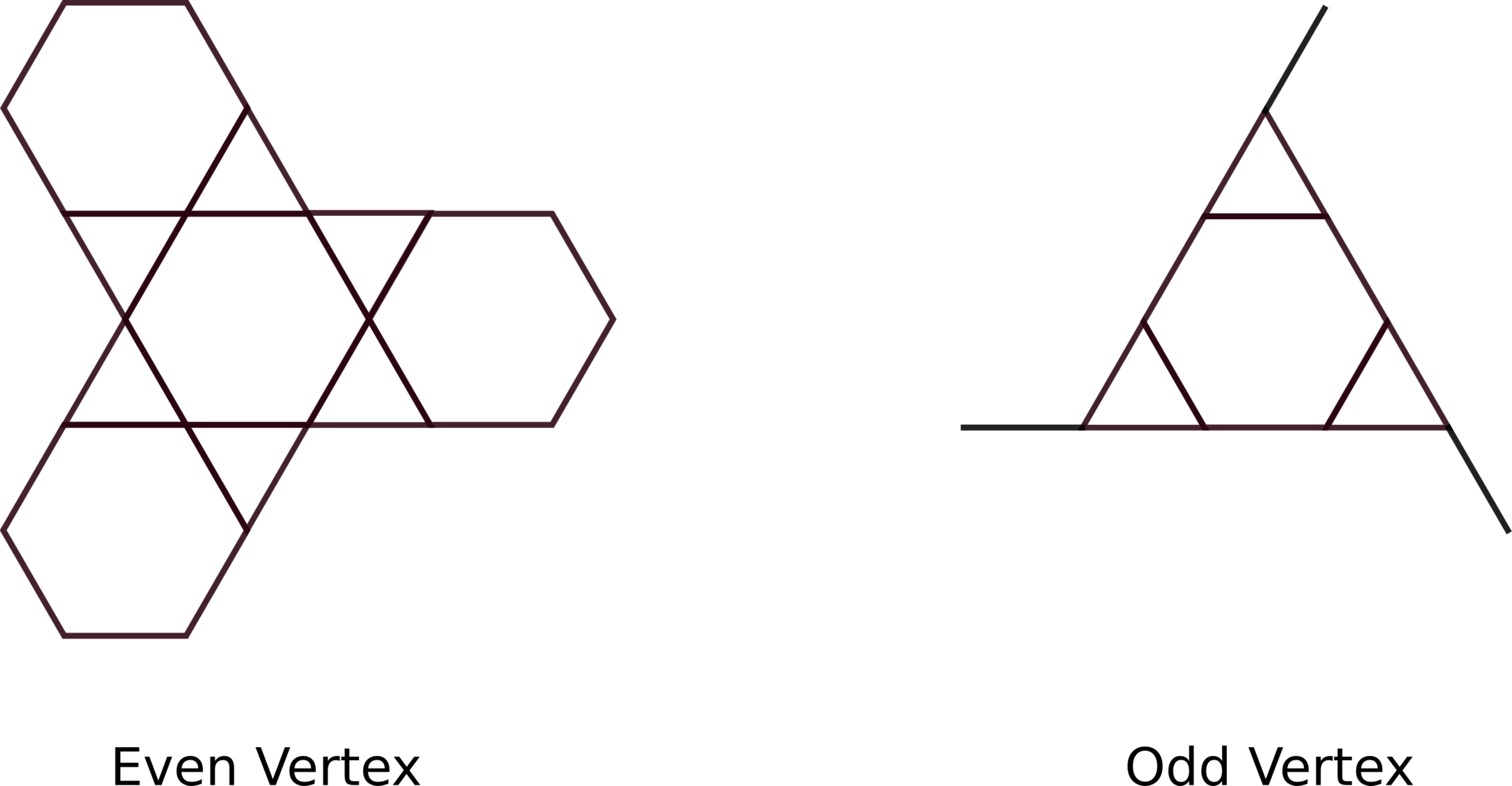}
\caption{Vertex gadgets for 6.3.6.3}
\label{fig:6363vertex}
\end{figure}
\begin{figure}[h]
\centering
\includegraphics[width=70mm]{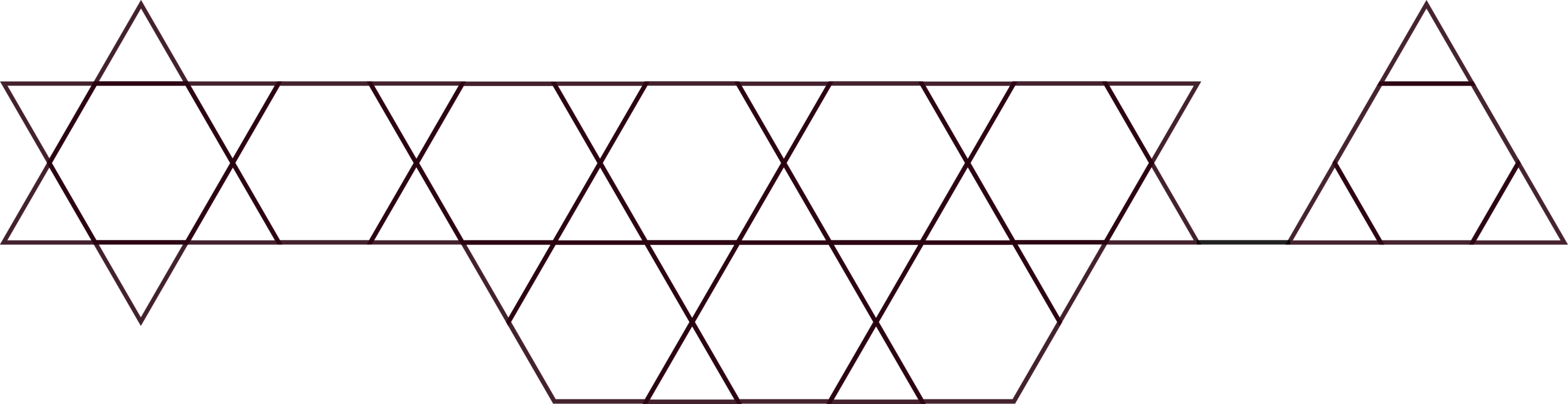}
\caption{Edge gadgets for 6.3.6.3}
\label{fig:6363edge}
\end{figure}
 Each edge gadget has two kinds of traversals: return paths and cross paths. Return paths begin and end on the same end of the edge while cross paths start and finish on different ends. With some inspection, it is clear that return paths and cross paths are the only two kinds of traversals allowed in the edge gadget. Figure~\ref{fig:6363return} shows a possible return path. Different from those of previous tessellations, the edge gadget here has two kinds of cross paths as shown in Figure~\ref{fig:6363cross}. Although the two kinds of cross paths start out the same from the odd vertex gadget on the right, they finish in the even vertex on the left differently. The way a cross path connects to an even vertex gadgets dictates which direction it can go next. The upper cross path must turns clockwise when going through the even vertex, allowing it to connect to an upper cross path while the lower one must turn counter-clockwise, allowing it to connect to a lower cross path. By choosing the correct kind of cross paths, any pair of the three edges of the even vertex gadget can be taken by compatible cross paths. By inspection, we can easily see that odd vertex gadget can connect to any pair of the three edges in two cross paths as well. \\
 \begin{figure}[h]
\centering
\includegraphics[width=70mm]{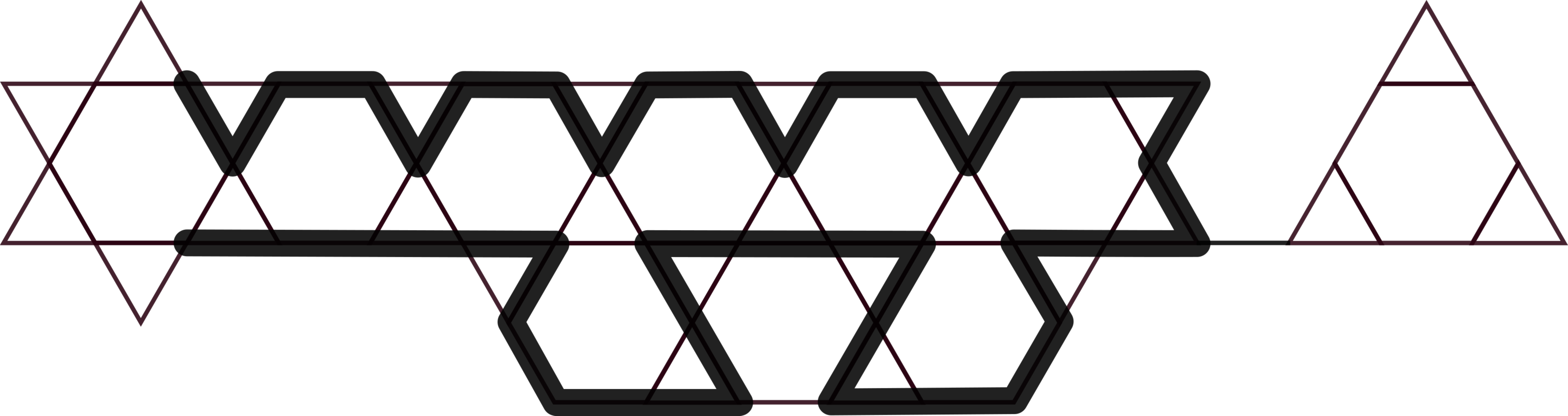}
\caption{Return path for 6.3.6.3 edge gadget}
\label{fig:6363return}
\end{figure}
\begin{figure}[h]
\centering
\includegraphics[width=70mm]{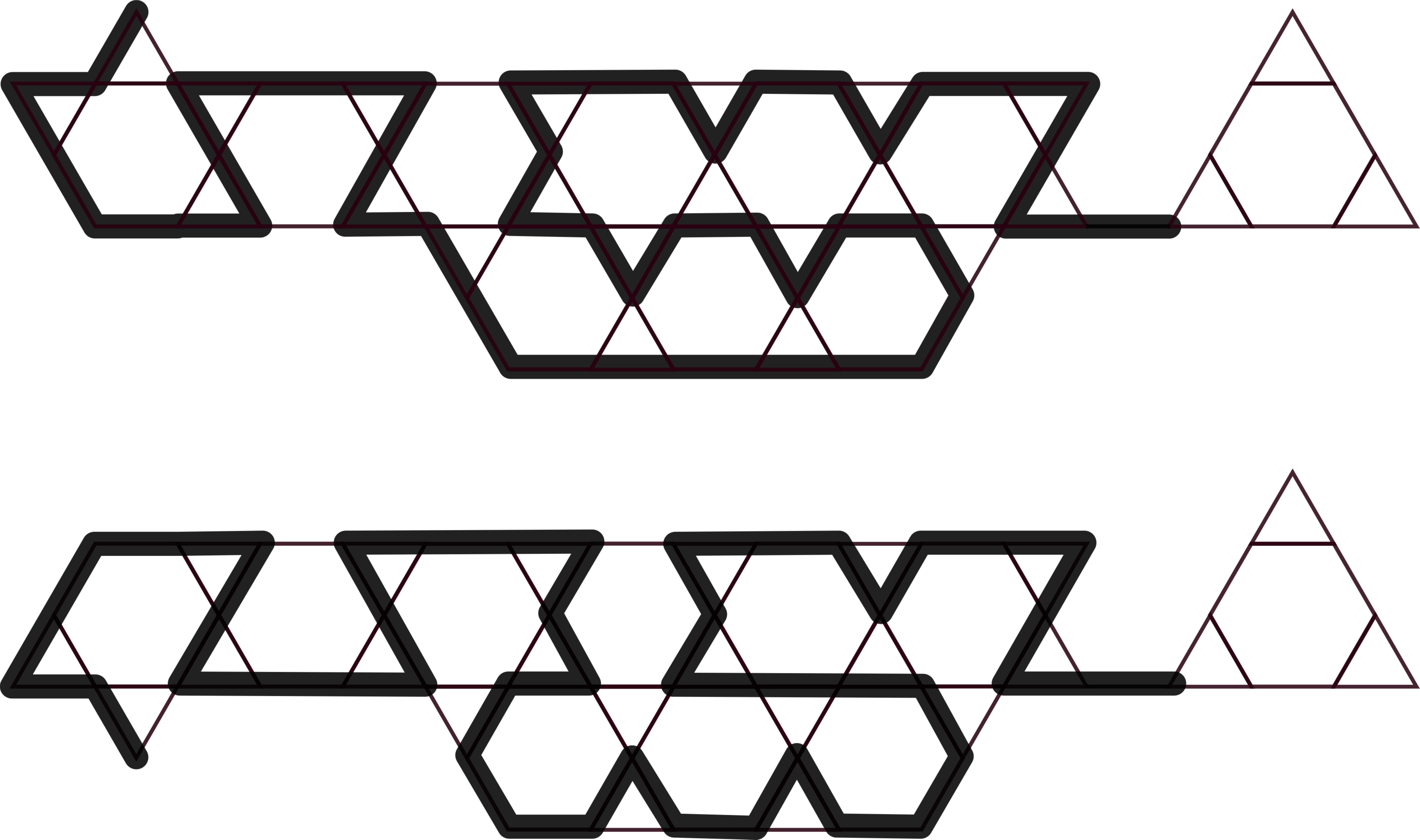}
\caption{Two Kinds of Cross Paths for 6.3.6.3 edge gadget}
\label{fig:6363cross}
\end{figure}
 Now, we will show that the simulated graph $G$ has a Hamiltonian cycle if and only if the original graph $G'$ has a cycle. If the original hexagonal grid $G'$ has a cycle $C'$, then we go through the edges gadgets representing taken edges in $C'$ with a cross path and those representing untaken edges with a return path. Note that we need to use the correct kind of cross paths so that the choice matches the turning at the vertex. If so, then there is a also a Hamiltonian cycle in $G$. If the simulated graph $G$ has a Hamiltonian cycle, each vertex gadget must be connected to exactly two cross paths, which indicate that there is a Hamiltonian cycle in $G'$.\\
\end{proof}

\subsubsection{3.12.12 Tessellation}
\begin{figure}
\centering
\includegraphics[width=70mm]{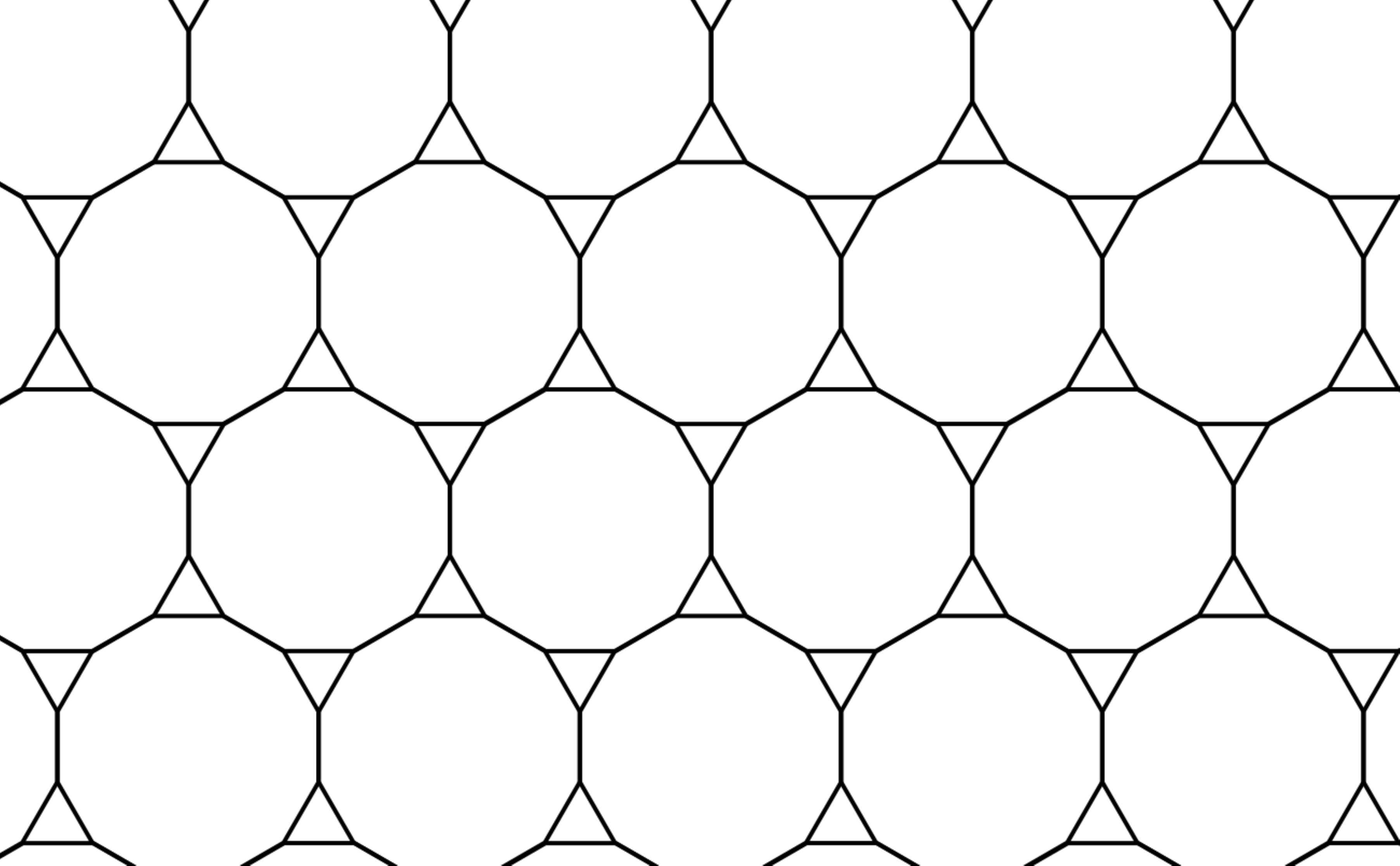}
\caption{3.12.12 Tessellation}
\end{figure}
\begin{theorem}
The HCP in the grid graphs of the 3.12.12 tessellation is NP-complete.\\
\end{theorem}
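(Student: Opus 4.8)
The plan is to reduce from the HCP in hexagonal grids, exactly as in the three preceding tessellations of this subsection. The key observation is that the 3.12.12 tessellation is the truncated hexagonal tiling: truncating each degree-3 vertex of the hexagonal grid replaces that vertex with a small triangle and turns each hexagonal face into a dodecagon. Consequently the 3.12.12 lattice inherits a scaled copy of the translational symmetries of the hexagonal grid, so embedding a construction built from vertex and edge gadgets is straightforward, just as we observed for the 3.4.6.4 case.

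First I would use the triangles, which are copies of $K_3$, as the vertex gadgets. As in Section~\ref{sec:3464}, a triangle can always be traversed by entering at one vertex and leaving at another, so it behaves well as a degree-3 junction representing a hexagonal-grid vertex. Because the hexagonal grid $G'$ is bipartite, I would again design two kinds of vertex gadget, one for even and one for odd vertices, differing only in how they attach to the incident edge gadgets. Next I would build the edge gadget from a chain of dodecagons and triangles, made long enough that it admits exactly two vertex-covering traversals: a \emph{cross path}, which enters at one end and exits at the other, and a \emph{return path}, which enters and exits at the same end.

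I would attach each odd vertex gadget to its edge gadgets through a single-edge connection, which (as in the 3.4.6.4 reduction) forbids a return path from entering an odd vertex gadget and thereby forces every odd triangle to be met by two cross paths. Establishing the biconditional then follows the now-standard argument: given a Hamiltonian cycle $C'$ in $G'$, traverse the gadgets of the taken edges with cross paths and those of the untaken edges with return paths, routing through each triangle appropriately, to obtain a Hamiltonian cycle $C$ in $G$; conversely, given $C$, each odd vertex gadget must be met by exactly two cross paths and each even vertex gadget by two cross paths (optionally together with a return path), so declaring the cross-path edges ``taken'' recovers a Hamiltonian cycle $C'$ in $G'$. Membership in NP is immediate, since a Hamiltonian cycle is a polynomial-size certificate.

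The main obstacle I anticipate is the edge-gadget design. The 3.12.12 tessellation is sparse — it is only 3-regular, with large dodecagonal faces and only the small triangles providing cross-connections — so there is little structural freedom compared with the 3.4.6.4 tessellation. The delicate part is choosing a chain of dodecagons and triangles whose only vertex-covering traversals are the cross and return paths, with no spurious traversal that would break the correspondence, while simultaneously ensuring that the single-edge attachment to odd vertices correctly excludes the return path. Verifying by inspection that these are indeed the only admissible traversals, as was done for each earlier gadget, is where the real work lies.
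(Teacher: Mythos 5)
Your opening observation---that 3.12.12 is the truncated hexagonal tiling, with one small triangle per hexagonal-grid vertex---is exactly the paper's starting point, but you then abandon its real payoff. Under truncation, each \emph{edge} of the hexagonal grid corresponds to the single lattice edge joining two adjacent triangles. The paper's reduction therefore needs no edge gadget at all: given a hexagonal grid graph $G'$, let $G$ be the 3.12.12 grid graph induced by the corners of the triangles corresponding to the vertices of $G'$; the induced edges between triangles then correspond exactly to the edges of $G'$. An untaken edge of $G'$ is simply an edge of $G$ that the cycle does not use, and since that edge has no interior vertices there is nothing left to cover---so return paths, cross paths, pin connections, and the even/odd vertex distinction are all unnecessary. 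Correctness is a short parity argument: at each triangle, a Hamiltonian cycle of $G$ must use an even and nonzero number of the three outgoing edges, hence exactly two, meeting the triangle in two paths at a $120^\circ$ angle; contracting each triangle maps the cycle to a Hamiltonian cycle of $G'$. Conversely, a Hamiltonian cycle of $G'$ lifts to $G$ because $K_3$ always admits a path entering at one prescribed corner, exiting at another, and visiting the third.

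The genuine gap in your version is the edge gadget, which you explicitly defer and never construct---and in the form you describe it cannot exist. In this 3-regular tessellation, triangles attach to their neighbors only through single edges, so a ``chain of dodecagons and triangles'' linked in this way has internal bridges. A return path must reach the far end of the gadget and come back to the end where it entered, forcing it to traverse some bridge twice, which is impossible for a cycle that visits each vertex exactly once; hence every untaken edge of $G'$ would leave the interior vertices of its gadget uncoverable, and the reduction breaks. Avoiding this would require a gadget that is two vertex-disjoint strands wide throughout (in the spirit of the 4.8.8 and 4.6.12 constructions in Section~\ref{sec:TRVB}), which is precisely the ``real work'' you acknowledge leaving undone. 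The correct fix is not to do that work but to collapse the edge gadget to a single edge, as the paper does.
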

\begin{proof}
This tessellation is composed of dodecagons and triangles. For a hexagonal grid $G'$, we construct a simulated graph $G$ in the tessellation by using the triangles as vertex of $G'$ and the edges in between triangles as the edges of $G'$. If a Hamiltonian cycle exists in $G$, each triangle must be connected to two paths that form a $120^\text{o}$ angle. Then, there must also be a Hamiltonian cycle in the hexagonal grid $G'$. If there is a Hamiltonian path in the hexagonal grid $G'$, then there exist one in $G$.\\
\end{proof}

\subsection{HCPs that Reduce from the HCP in Planar Max Degree 3 Bipartite Graph}
This section proves that the HCPs in 3.3.4.3.4 tessellation and 3.3.3.4.4. tessellation are NP-complete by reducing from the NP-complete HCP in planar max degree 3 bipartite graph\cite{HCPsquare}. For any given Planar Max Degree 3 Bipartite Graph $G'$, which is sometimes referred to as the original graph, we can construct a grid graph $G$ of the tessellation that has a Hamiltonian cycle if and only if $G'$ has a Hamiltonian cycle. When constructing $G$, we again use gadgets to simulate the edges and vertices of the original graph $G$.

\subsubsection{3.3.4.3.4 Tessellation}
\begin{figure}[H]
\centering
\includegraphics[width=70mm]{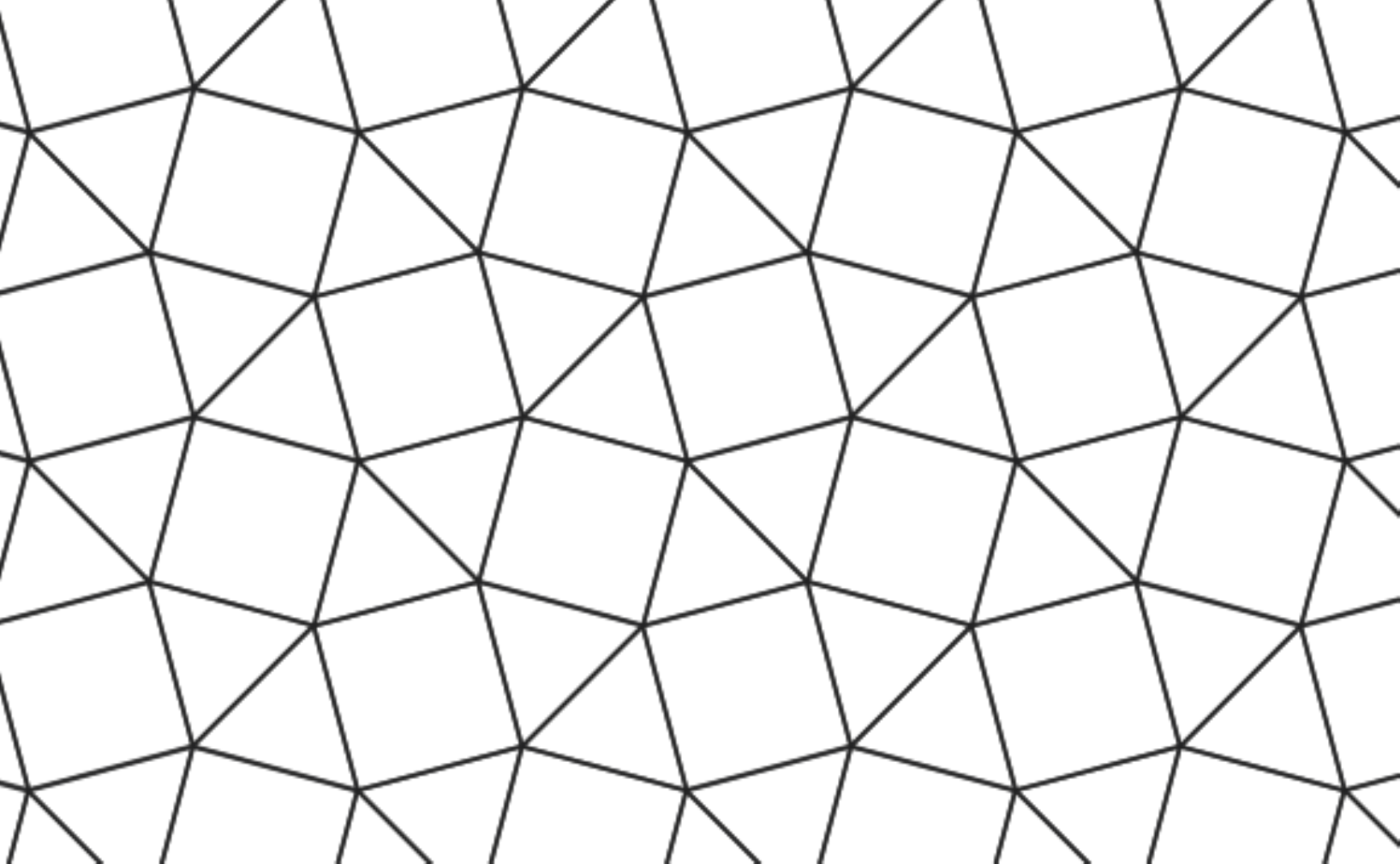}\\
\end{figure}
\begin{theorem}
\label{thm:33434}
The HCP in the grid graphs of the 3.3.4.3.4 tessellation is NP-complete.\\
\end{theorem}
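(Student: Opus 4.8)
The plan is to reduce from the HCP in planar max degree $3$ bipartite graphs, following the same gadget-based template used for the hexagonal-grid reductions in Section~\ref{sec:HexGrid}. Membership in NP is immediate, since a claimed Hamiltonian cycle is a polynomial-size certificate verifiable in linear time; the work is entirely in the hardness reduction. Given an instance $G'$, I would first fix a planar orthogonal embedding of $G'$ in which every vertex sits on a lattice point and every edge is routed as a sequence of unit-direction segments meeting at right-angle bends (such drawings exist for planar max degree $3$ graphs, have size polynomial in $|G'|$, and are computable in polynomial time). I would then overlay this drawing on the 3.3.4.3.4 lattice, replacing each vertex of $G'$ by a vertex gadget and each routed edge by a chain of edge gadgets, inserting turn gadgets wherever an edge bends.

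The core of the construction is an edge gadget admitting exactly two vertex-covering traversals: a \emph{cross path}, which enters at one end and exits at the other (encoding a \emph{taken} edge), and a \emph{return path}, which enters and exits at the same end (encoding an \emph{untaken} edge). I would first lay out a straight edge gadget in the snub-square lattice, verify by inspection of the local tile structure that these two traversals are the only ways to visit all of its vertices, and then design a turning variant with the same two-mode behavior so that bent edges inherit the property. Because the source graph is bipartite, I can use two different vertex gadgets for the two color classes. As in the proof of Theorem~\ref{thm:3464}, the key trick is to connect one color class (say the \emph{odd} gadgets) to their incident edge gadgets through a single-edge interface, which forbids a return path from entering an odd gadget; this is what forces global consistency between the two endpoints of each edge.

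I would then design the vertex gadgets so that (i) each is fully traversable, and (ii) any covering traversal uses exactly two of its incident edges as cross paths, matching the requirement that a Hamiltonian cycle passes through each original vertex exactly once. Since $G'$ has maximum degree $3$, I need gadgets for degrees $1$, $2$, and $3$; a degree-$1$ vertex can never have degree $2$ in a cycle, so such instances are trivial no-instances and can be rejected up front. In each remaining case the incident-cross-path count must be pinned to exactly two, with the even gadgets tolerating an additional incident return path without breaking coverage while the single-edge interface on the odd side enforces the correspondence. The equivalence argument then mirrors the earlier sections: from a Hamiltonian cycle $C'$ in $G'$ I route cross paths through the gadgets of taken edges and return paths through those of untaken edges and complete each vertex gadget internally, yielding a Hamiltonian cycle $C$ in $G$; conversely, a Hamiltonian cycle in $G$ induces, via the ``exactly two cross paths per vertex gadget'' property, a spanning $2$-regular subgraph of $G'$, and connectivity of $C$ forces this subgraph to be a single cycle, i.e.\ a Hamiltonian cycle $C'$.

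The main obstacle I anticipate is purely geometric. The mixed triangle/square local structure of the 3.3.4.3.4 lattice offers more branching options at each vertex than the hexagonal-grid cases did, so the delicate part is making the tiles cooperate so that a single edge gadget admits \emph{only} the cross and return traversals, that the turn gadget preserves exactly this two-mode behavior, and that the vertex gadgets realize the degree-$2$ constraint for all of degrees $1$, $2$, and $3$. Ruling out unintended covering traversals simultaneously for the straight gadget, the turning gadget, and both color classes is where the real care is required, and verifying it will likely come down to a careful case analysis of the possible local paths through each gadget's tiles.
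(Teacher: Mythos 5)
There is a genuine gap here: your proposal is a plan for a gadget-based reduction, but it never actually constructs the gadgets, and in this kind of proof the gadgets \emph{are} the proof. You correctly pick the source problem (HCP in planar max degree $3$ bipartite graphs), the cross-path/return-path dichotomy, the single-edge ``pin'' interface on one color class, and the two-directions-of-the-equivalence argument --- all of which matches the paper's high-level structure. But you explicitly defer the only hard part (``verifying it will likely come down to a careful case analysis\ldots''): exhibiting an edge gadget in the 3.3.4.3.4 lattice that admits \emph{only} cross and return traversals, a turn gadget preserving this, and vertex gadgets that pin the incident cross-path count to exactly two. Without those objects, no part of the if-and-only-if argument can actually be checked, so the proposal cannot be assessed as a correct proof; it is a correct strategy with its substance missing.

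What you also miss is the observation that makes the paper's proof short and avoids designing anything from scratch: the 3.3.4.3.4 tessellation is exactly a square grid with some extra diagonal edges. The paper therefore reuses the edge, even-vertex, and odd-vertex gadgets of the 1982 square-grid reduction \cite{HCPsquare} verbatim (together with its parity-preserving embedding), and the only new work is checking that the extra diagonals do not create unintended traversals --- e.g.\ the cross paths multiply but keep the essential end-to-end behavior, and the odd-vertex pin connection becomes a single-\emph{point} rather than single-edge connection yet still blocks return paths. That this check is not vacuous is shown by the neighboring 3.3.3.4.4 result in the paper, where a diagonal genuinely disables a pin connection and the embedding has to be modified to route around it. Your from-scratch plan could in principle be completed, but it takes on exactly the combinatorial burden (ruling out rogue traversals in a mixed triangle/square lattice) that the paper's ``square grid plus diagonals'' viewpoint reduces to a verification of known gadgets.
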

\begin{proof}
We will reduce from HCP in planar max-degree-3 bipartite graphs. First observe that this tessellation can be viewed as a square grid with some extra diagonals. We directly use the gadgets of the square grid proof in the 1982 paper for constructing $G$\cite{HCPsquare}. The edge, even vertex and odd vertex gadgets are shown below. Note that these gadgets are identical to the square grid gadgets except they have some extra edges. In creating the simulated graph $G$ based on a planar max degree 3 bipartite graph $G'$, we go through the same process as that in the square grid reduction: first create a parity-preserving embedding of the max degree 3 bipartite graph; then replace the edges and vertices of the embedding with respective gadgets\cite{HCPsquare}.\\
\begin{figure}[H]
\centering
\includegraphics[width=.6\textwidth]{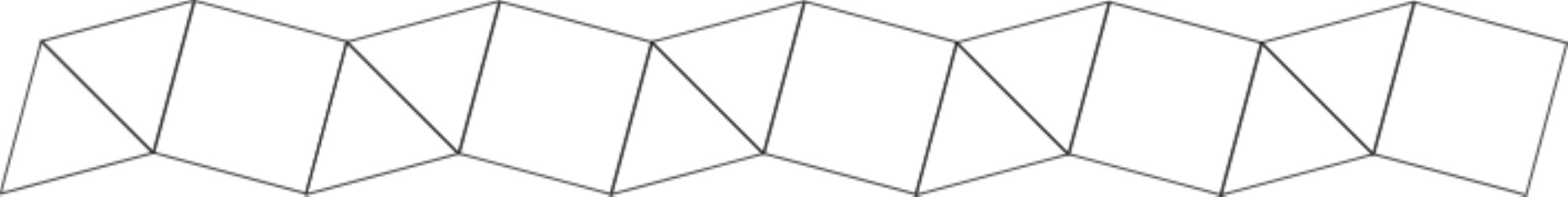}\\
\caption{Edge gadget for 3.3.4.3.4}
\label{fig:33434edge}
\end{figure}
\begin{figure}[h]
\centering
\includegraphics[width=\textwidth]{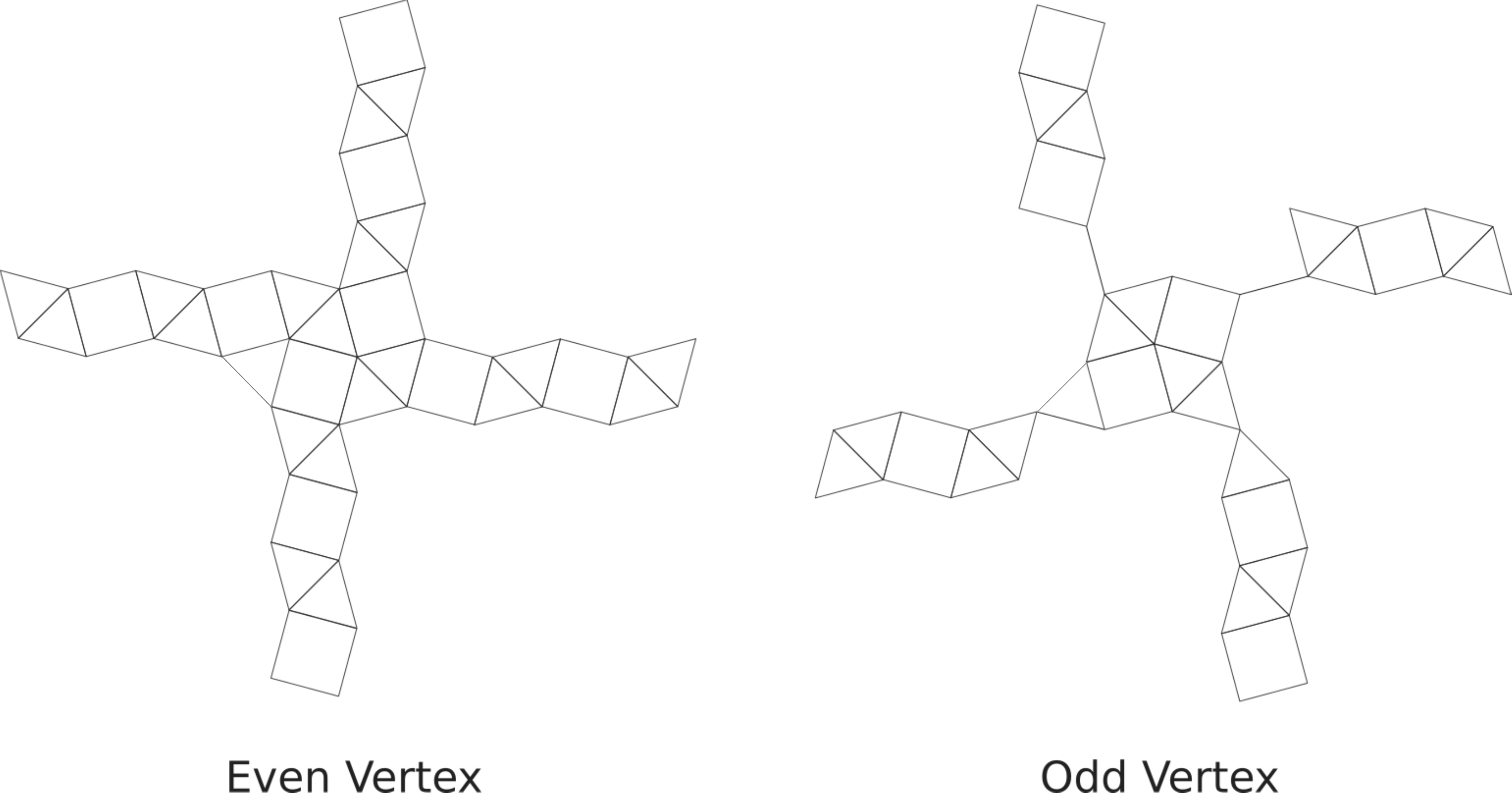}\\
\caption{Vertex gadgets for 3.3.4.3.4}
\label{fig:33434vertex}
\end{figure}
It is not hard to see that there are only two kinds of traversals for the edge gadget: cross paths and a return paths. Although there is more than one kind cross path due to the extra edges, they have the essential characteristic of starting from one end of the gadgets and finishing at the other end (unlike the return path that begins and finishes at the same end). Another difference from the square grid reduction is that the odd vertex gadgets connect to the bottom edge gadget through a single point rather than a single edge as the other edge gadgets. This single point connection also prevents a return path from entering the odd vertex gadget. The single edge and single point connections are called pin connections. It is clear from the pin connections, that we can only have the path enter and exit odd vertices once. Since the graph is bipartite, this forces the other two edges to be return paths, ensuring out simulated path can only enter and exit each vertex once.\\

\end{proof}

\begin{theorem}
The HCP in the grid graphs of the 3.3.3.4.4 tessellation is NP-complete.\\
\end{theorem}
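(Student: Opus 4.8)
The plan is to reduce from the HCP in planar max degree 3 bipartite graphs, mirroring the argument for the 3.3.4.3.4 tessellation in Theorem~\ref{thm:33434}. The structural observation I would establish first is that the 3.3.3.4.4 tessellation, the elongated triangular tiling, is built from horizontal strips of unit squares joined by strips of triangles; the square faces then provide a square-grid-like backbone while the triangle edges play the role of the extra diagonal edges that appeared in the snub-square (3.3.4.3.4) case. Using this backbone I would carry out the same procedure as in the 1982 square-grid reduction\cite{HCPsquare}: first fix a parity-preserving embedding of the input graph $G'$, then replace each edge and vertex of the embedding by a gadget assembled from the square and triangle faces.

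For the gadgets I would adapt the square-grid edge and vertex gadgets, incorporating the triangle edges that the tessellation forces. The first point to verify is that the edge gadget still admits exactly two families of traversals visiting all of its vertices: a \emph{cross path}, entering at one end and leaving at the other, and a \emph{return path}, entering and leaving at the same end. As in the 3.3.4.3.4 case the triangle edges create several variants of the cross path, but I would argue by inspection that every such traversal retains the cross/return dichotomy and that no all-vertices traversal can connect the two ends in an unintended pattern. For the vertices I would construct separate even and odd gadgets and attach the odd gadgets to their incident edge gadgets through \emph{pin connections}, that is, a single shared edge or a single shared vertex. The pin connection is exactly what forbids a return path from entering an odd vertex gadget, so each odd vertex must be traversed by two cross paths; since $G'$ is bipartite, this forces every even vertex to be entered and exited exactly once as well.

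Granting the gadget behavior, the equivalence is routine in both directions. Given a Hamiltonian cycle $C'$ of $G'$ I route a cross path through every taken edge and a return path through every untaken edge, selecting cross-path variants compatible with the turn required at each vertex, which yields a Hamiltonian cycle of $G$; conversely a Hamiltonian cycle of $G$ projects back to one of $G'$ by declaring an edge taken precisely when its gadget carries a cross path, and membership in NP is immediate. I expect the main obstacle to be the case analysis certifying that the triangle edges introduce no traversal of the edge gadget beyond the cross and return paths, together with the geometric bookkeeping needed to seat the even, odd, and edge gadgets consistently inside the elongated triangular lattice while preserving both the pin connections and the parity of the embedding.
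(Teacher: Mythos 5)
Your plan follows the paper's reduction for the 3.3.4.3.4 case, but it misses the one idea that the paper's proof of this particular theorem actually hinges on. In the 3.3.3.4.4 tessellation the extra diagonal edges do not merely create additional cross-path variants inside the edge gadget; in certain orientations a forced diagonal directly bridges an odd vertex gadget to its adjacent edge gadget, turning what should be a single-edge (or single-point) pin connection into a two-edge connection. Once that happens, a return path \emph{can} enter the odd vertex gadget through the extra diagonal, so the pin argument you rely on (``the pin connection is exactly what forbids a return path from entering an odd vertex gadget'') is simply false for those orientations, and the cross/return accounting that forces each odd vertex to be visited exactly once collapses. This is not geometric bookkeeping that comes out fine after inspection; the naive transplant of the square-grid reduction genuinely fails, which is why the paper explicitly modifies the reduction rather than reusing it verbatim.

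The paper's fix is at the level of the embedding, not the gadgets: since the input graph has max degree 3 and each odd vertex gadget still has three functioning pin directions, one first enlarges the parity-preserving square-grid embedding of $G'$ by a factor of 3 (so every segment is long enough to be re-routed and parities are preserved), and then locally adjusts the embedding so that every attachment that would use a disabled direction (e.g., the upper connection) is replaced by an attachment in a functioning direction (left or right). With all pin connections functioning, the rest of the argument goes through as in the 3.3.4.3.4 case. Your proposal needs this detour-and-reroute step (or some equivalent mechanism for avoiding the disabled connections); without it, the claimed equivalence between Hamiltonian cycles of $G'$ and $G$ does not hold, because spurious cycles in $G$ using return paths into odd vertex gadgets cannot be ruled out.
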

\begin{proof}
Similar to the 3.3.4.3.4 tessellation, this tessellation can also be considered as a square grid with extra diagonals. Because its resemblance to square grid, we again use the square grid gadgets. However, if we use the same reduction as in \cite{HCPsquare}, an extra diagonal may disable a pin connection, being an extra edge that connects the odd vertex gadget with the edge gadget. Then, a return path can enter into the odd vertices through this extra edge, causing the former pin connection to no longer function.
The connection to the upper edge gadget in an odd vertex gadget shown in Figure~\ref{fig:33344parrallel} is an example of a disabled pin connection. Thus, we will need to modify the reduction. \\
\begin{figure}[H]
\centering
\includegraphics[width=60mm]{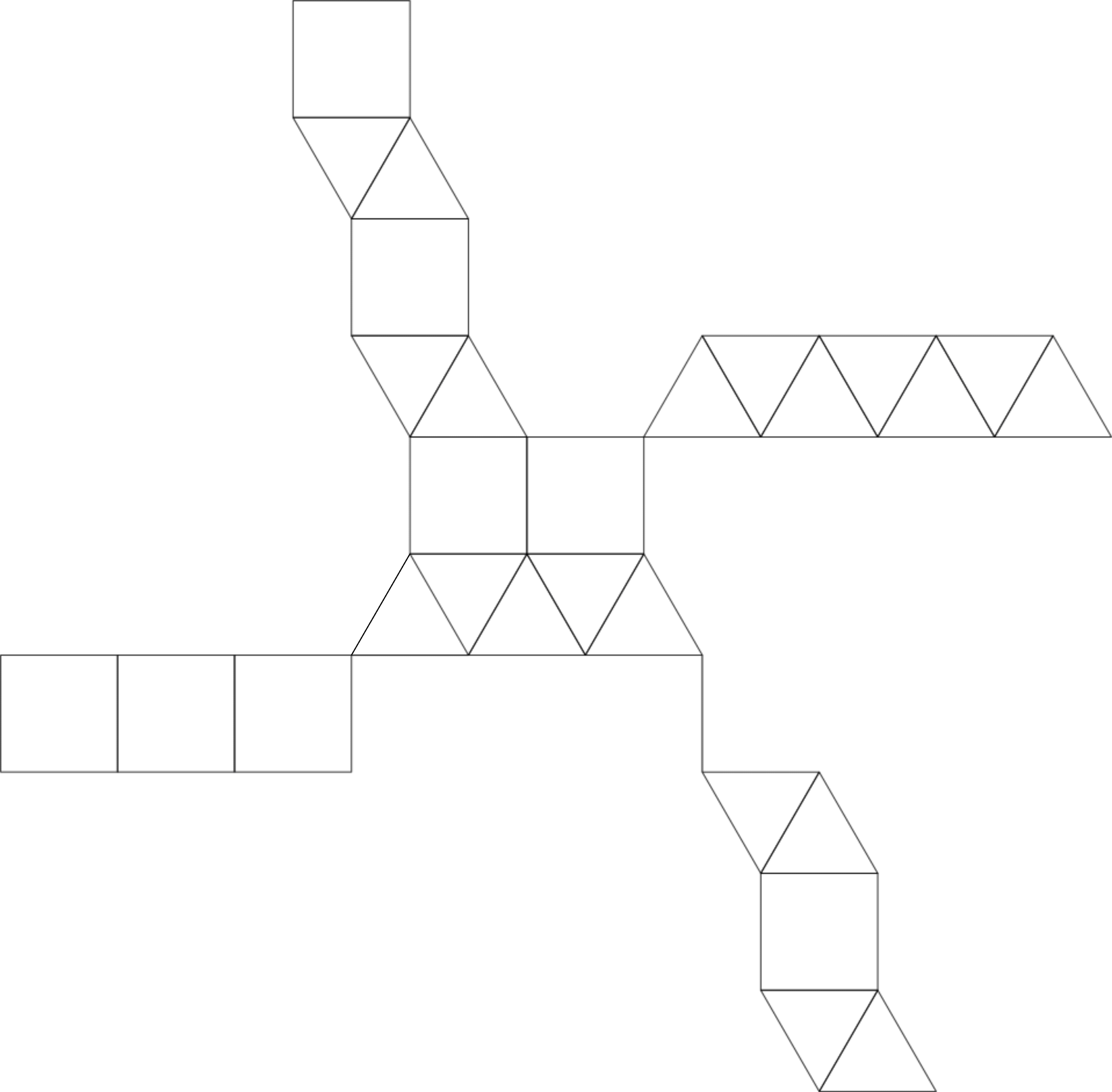}\\
\caption{An odd vertex gadget}
\label{fig:33344parrallel}
\end{figure}
Although one pin connection may be disabled in a odd vertex gadget, there remains three other functioning pin connections. Because the reduction only requires max degree three vertices, there are still ways to make the reduction work. We construct the simulated grid $G$ in the following way. Given a parity preserving square grid embedding of the original max degree three bipartite graph $G'$ as mentioned in the 1982 paper \cite{HCPsquare}, we enlarge the embedding by a factor of 3 so that any single segment is at least three segments long and the parities of the vertices are preserved. We then adjust the embedding by replacing every disabled pin connection with a functioning pin connection. Figure~\ref{fig:33344embed} shows that if the upper connection is disabled, we use the left or right connection to replace it (the upper row represents embedding before adjustment while the lower row represents embedding after adjustment). Based on the adjusted embedding, we can then construct a simulated graph $G$ using the square grid gadgets. Since the pin connections are all functioning in $G$, the reduction works.\\
\begin{figure}[h]
\centering
\includegraphics[width=80mm]{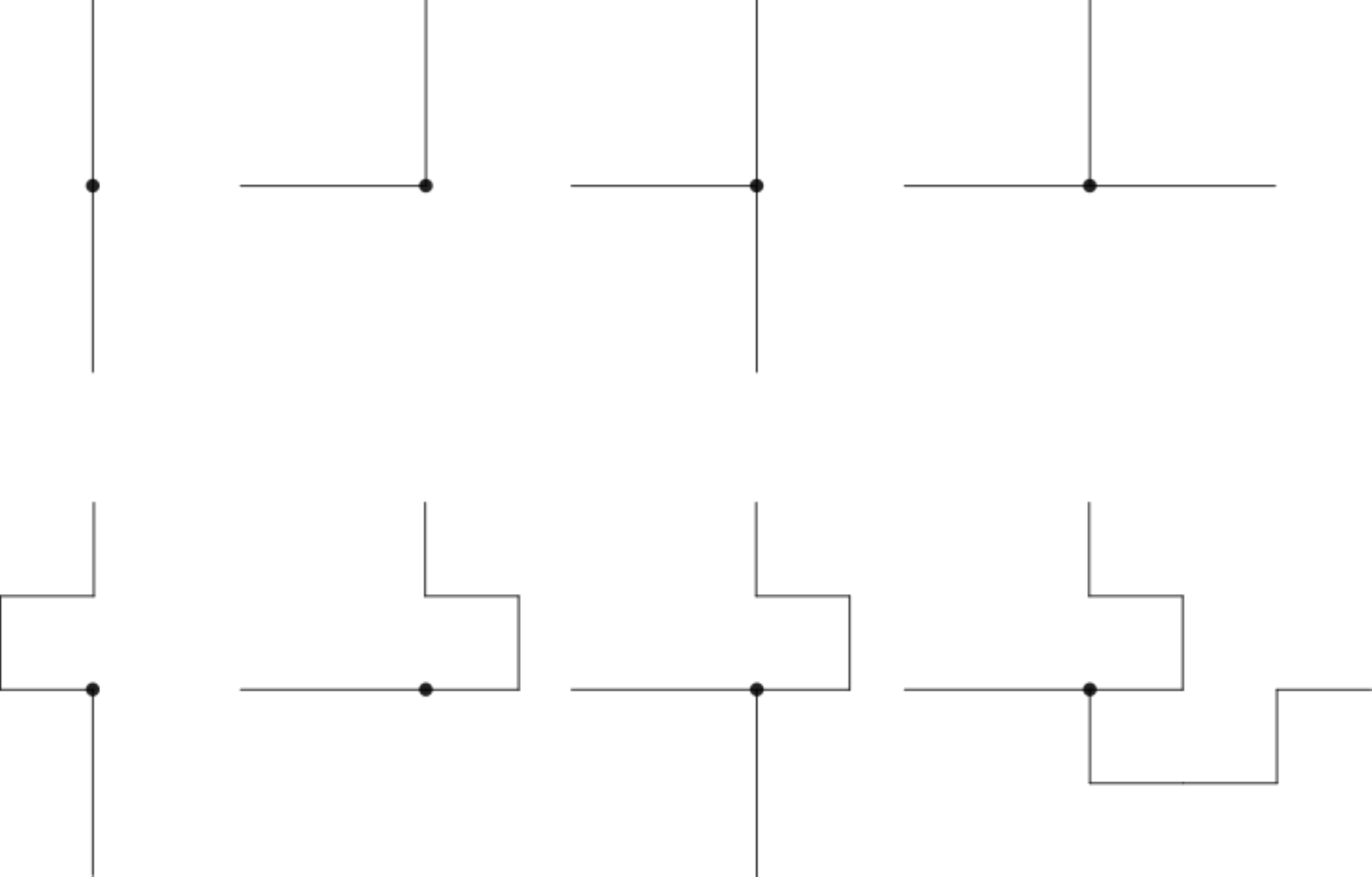}\\
\caption{Embedding adjustment}
\label{fig:33344embed}
\end{figure}
\end{proof}

\subsection{HCPs that Reduce from the Tree-Residue Vertex Breaking Problem}
\label{sec:TRVB}
In this section, we show the HCPs in the 4.8.8 tessellation and the 4.6.12 tessellation are NP-complete by reducing from the Tree-Residue Vertex Breaking Problem studied in\cite{TRVBresult}. Here, \emph{breaking} a degree n vertex means turning the vertex into n degree one vertices that are at the ends of the n edges. Tree-Residue Vertex Breaking problem asks that given a planar multigraph $M$ and with some of its vertices marked breakable, is it possible to break some of the breakable vertices so that the resulting graph is a tree. N-Regular Breakable Planar Tree-Residue Vertex-Breaking problem asks that given a planar multigraph with all the vertices degree n and breakable, is it possible to produce a tree from breaking some vertices. The HCPs in these section reduce from 4-Regular Breakable Planar Tree-Residue Vertex-Breaking problem and 6-Regular Breakable Planar Tree-Residue Vertex-Breaking problem, both of which are NP-complete\cite{TRVBresult}. The reduction works in this fashion: for any graph $M$, we will construct a grid graph $G$ of the tessellation so that $G$ has a Hamiltonian cycle if and only if $M$ is breakable.

\subsubsection{4.8.8 Tessellation}
\begin{figure}[H]
\centering
\includegraphics[width=70mm]{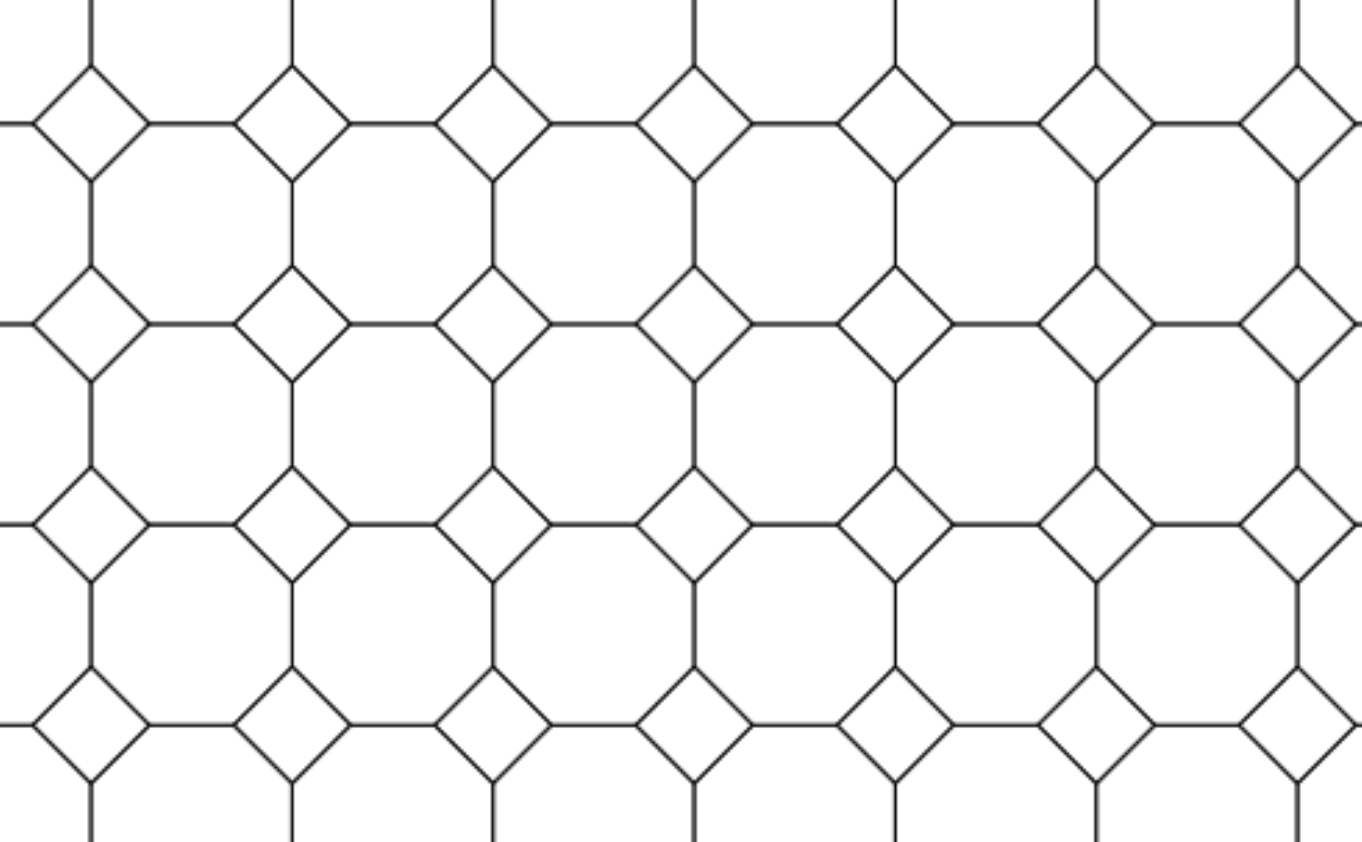}\\
\end{figure}
\begin{theorem}
The HCP in the grid graphs of the 4.8.8 tessellation is NP-complete.\\
\end{theorem}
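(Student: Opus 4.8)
The plan is to reduce from the 4-Regular Breakable Planar Tree-Residue Vertex-Breaking problem, exactly as announced at the start of this subsection, so that membership in NP is immediate (a Hamiltonian cycle is a certificate checkable in polynomial time) and the whole weight of the argument falls on NP-hardness. Given a planar multigraph $M$ whose vertices are all of degree $4$ and all breakable, I would fix a planar embedding of $M$, rescale it so that there is room for the gadgets, and then replace each edge of $M$ by an \emph{edge gadget} and each vertex by a degree-$4$ \emph{vertex gadget}, both realized as induced subgraphs of the $4.8.8$ grid. Because the underlying $4.8.8$ lattice is only $3$-regular, the degree-$4$ vertex gadget cannot be a single lattice vertex: it must be a small patch of squares and octagons exposing four ports, each port joining one incident edge gadget.

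The crux of the construction is the behavior of the two gadget types under an arbitrary Hamiltonian cycle $C$. First I would design the edge gadget as a forced ``wire'' carrying two parallel strands, so that (up to inessential local choices) $C$ must traverse it in a way that connects its two endpoints and covers all its vertices; this makes every edge of $M$ \emph{active}. Next I would design the vertex gadget so that, subject to covering all its vertices, $C$ admits exactly two classes of local configurations: an \emph{unbroken} state, in which the four incident ports are joined together through the gadget, and a \emph{broken} state, in which the gadget splits so that each incident edge's strands cap off and the ports are no longer connected through the gadget. These two states mirror precisely the TRVB operation of leaving a degree-$4$ vertex intact versus turning it into four degree-$1$ stubs.

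With these gadget invariants in hand, the equivalence follows from the standard ``boundary of a thickened tree'' argument. Interpreting each unbroken vertex gadget as a vertex and each active edge gadget as a doubled edge, a global Hamiltonian cycle in $G$ corresponds to the boundary curve of the thickened subgraph obtained from $M$ by deleting the broken vertices. This boundary is a single closed curve visiting everything exactly once if and only if that subgraph is a tree (any residual cycle would split the boundary into more than one component). Hence $C$ exists in $G$ if and only if some choice of broken vertices turns $M$ into a tree, which is precisely a yes-instance of the TRVB problem. I would close by confirming that the gadgets tile together consistently inside the $4.8.8$ tessellation and that $G$ is constructible in polynomial time.

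The main obstacle I anticipate is gadget \emph{soundness}: proving that the degree-$4$ vertex gadget, embedded in the restrictive $3$-regular $4.8.8$ lattice, admits \emph{exactly} the broken and unbroken local traversals and no spurious ones, and that the edge gadget really is forced. This demands a finite but careful case analysis of how a Hamiltonian cycle can enter, cover, and exit each gadget, together with a check that the two allowed vertex states connect to the edge-gadget strands compatibly (the analogue of the cross-path/return-path bookkeeping used in the earlier reductions). Making the four ports line up with the square-and-octagon faces while preserving these invariants is where the real work lies.
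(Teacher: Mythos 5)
Your proposal follows essentially the same route as the paper: a reduction from 4-Regular Breakable Planar TRVB in which forced two-strand edge gadgets make every edge active, vertex gadgets admit exactly a broken state (each edge's strands capped off) and an unbroken state (adjacent edges' strands joined around the vertex), and the equivalence is the thickened-tree boundary argument, i.e.\ the region enclosed by the edge gadgets is connected and hole-free if and only if breaking the chosen vertices turns $M$ into a tree. The soundness issue you flag is exactly what the paper resolves with its concrete gadgets, where eight single-edge connections between adjacent strands of the vertex gadget force the traversal into one of the two intended states.
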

\begin{proof}
We reduce from the 4-Regular Breakable Planar Tree-Residue Vertex-Breaking problem. When constructing a grid graph $G$ of the 4.8.8 tessellation based on $M$, we first make a square grid embedding of $M$, using a method such as the one described in\cite{Schaffter-1995}. Then, for each vertex of $M$, we use the vertex gadget in Figure~\ref{fig:488vertex}. For the edges in the embedding, we use the edge gadget formed by the boundary vertices of a three-octagon wide strip, as shown in Figure~\ref{fig:488wire}. Notice that the edge gadget can shift and turn easily. Due to this flexibility, we can form a graph $G$ based on the embedding using the gadgets.\\
\begin{figure}[H]
\centering
\includegraphics[width=80mm]{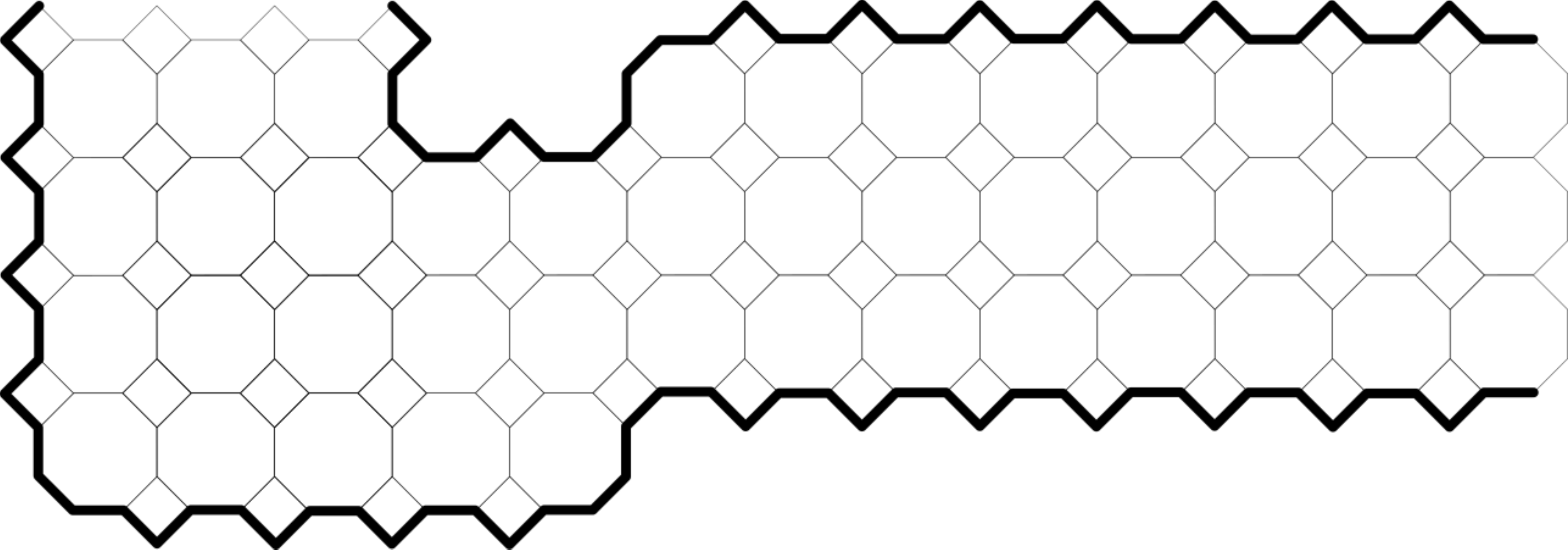}\\
\caption{Edge gadget with a turn for 4.8.8}
\label{fig:488wire}
\end{figure}
\begin{figure}[H]
\centering
\includegraphics[width=60mm]{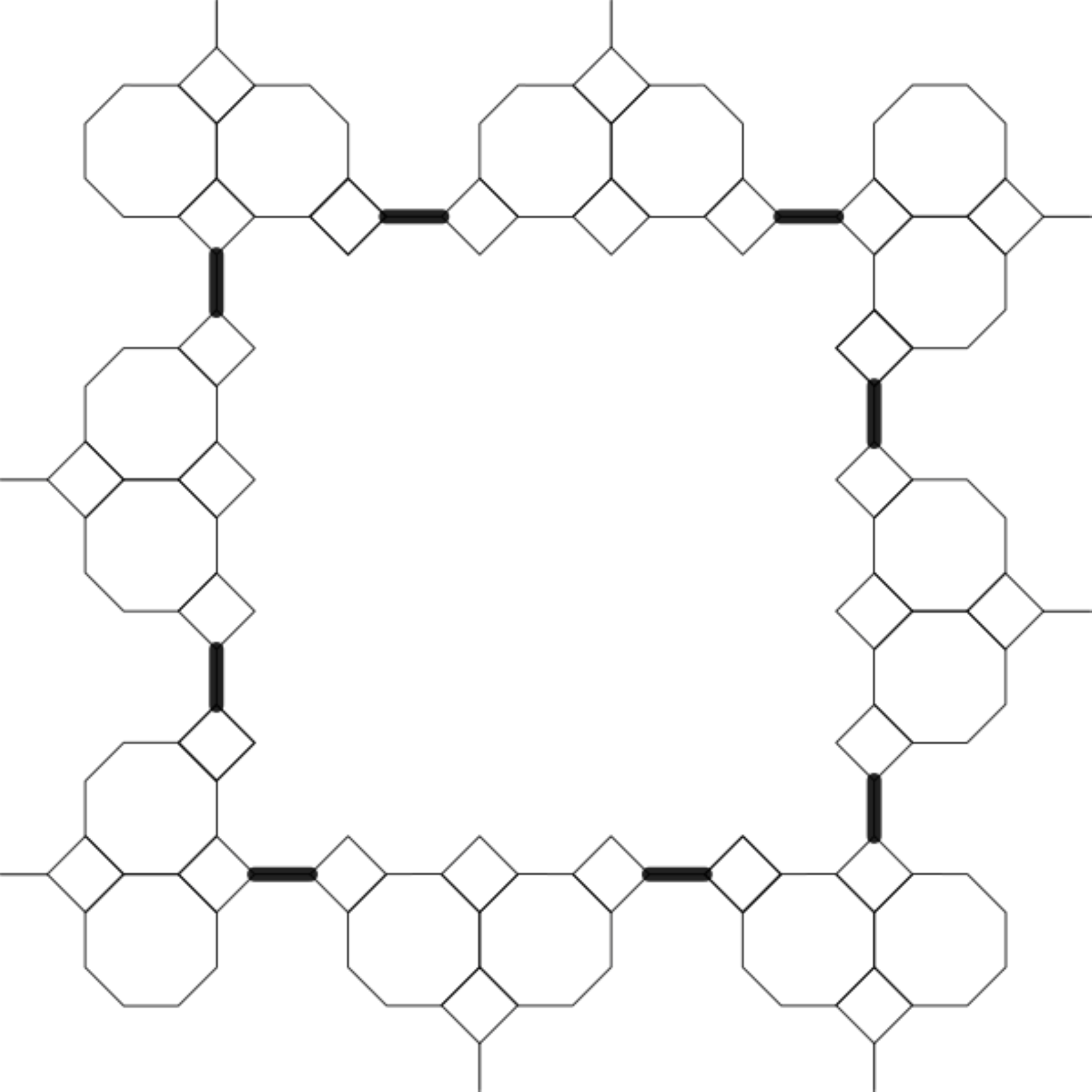}\\
\caption{Vertex gadget for 4.8.8}
\label{fig:488vertex}
\end{figure}
Now, we will show the constructed graph $G$ has a Hamiltonian cycle if and only if $M$ is breakable. Noticed that if $G$ has a cycle $C$, all the sides on the edge gadgets must be in $C$, and the freedom is only in how to traverse the vertex gadgets. Figure~\ref{fig:488vertexsolution} shows two solution to the vertex gadget. The four edge gadgets connect to the vertex on the four sides of it. Each edge gadgets has two separate paths of vertices that go into the vertex gadget. Note that there are eight single connection edges (bold edges in Figure~\ref{fig:488vertex}) in the vertex gadgets, each of which is in between a pair of adjacent series. If a cycle exists and a path comes in from a string, the path must enter one of the two adjacent single edge connections and then connect with the path coming in from another string. Thus, for a vertex gadget, there are only two kinds of solution: one that has two strings of the same edge connected or one that has two strings of two adjacent edges connected. The first kind is illustrated by the solution on the left, which correspond to a broken vertex in $M$ while the second kind is illustrated by the solution on the right side, which correspond to a unbroken vertex in $M$. To show that $G$ has a Hamiltonian cycle if and only if $M$ is breakable, we apply the reasoning used in the 2017 paper\cite{TRVBreduction}. If $M$ is breakable, then for every broken vertex in $M$, we traverse through the corresponding vertex gadget using the broken solution; for every unbroken vertex, we traverse through the corresponding vertex gadget using the unbroken solution. Note that after this procedure, the graph produced by breaking $M$ is the same as the region inside the edge gadgets in $G$. If the graph produced by breaking $M$ is indeed a tree, which is connected and acyclic, then the region inside the edges must also be connected and hole-free, which shows that there is a Hamiltonian cycle. If there is a Hamiltonian cycle in $G$, the region inside must by connected and hole-free, which then show that the graph $M$ can be broken down to a tree.
\begin{figure}[h]
\centering
\includegraphics[width=80mm]{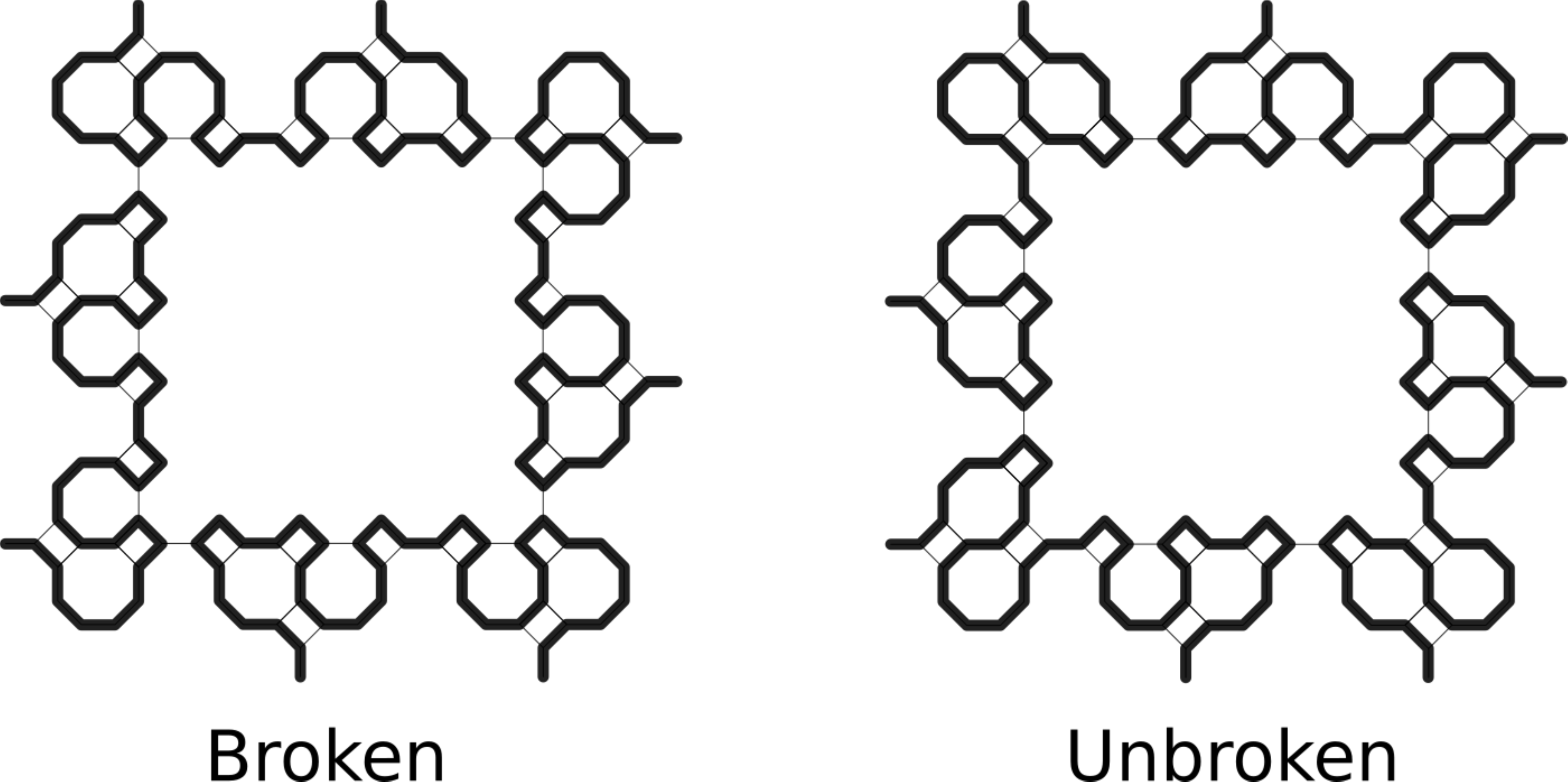}
\caption{Solutions for the 4.8.8 vertex gadget}
\label{fig:488vertexsolution}
\end{figure}
\end{proof}

\subsubsection{4.6.12 Tessellation}
\begin{figure}[H]
\centering
\includegraphics[width=70mm]{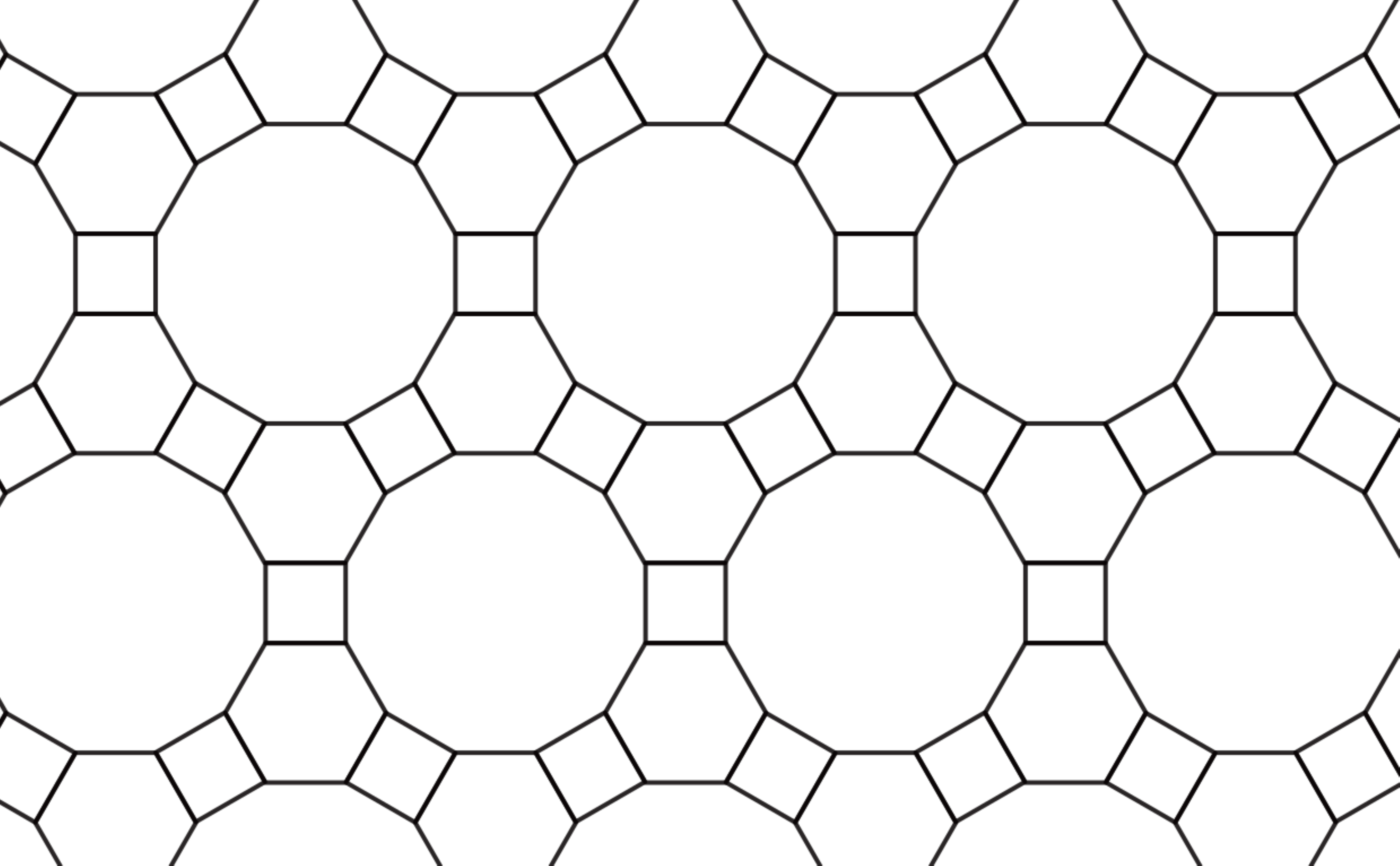}
\end{figure}
\begin{theorem}
The HCP in the grid graphs of the 4.6.12 tessellation is NP-complete.\\
\end{theorem}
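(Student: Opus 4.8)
The plan is to mirror the structure of the 4.8.8 reduction, this time reducing from the 6-Regular Breakable Planar Tree-Residue Vertex-Breaking problem, which is NP-complete. Membership in NP is immediate, since a purported Hamiltonian cycle can be verified in polynomial time, so the work is entirely in the hardness direction. Given a 6-regular breakable planar multigraph $M$, I would first fix a planar grid embedding of $M$ compatible with the 4.6.12 lattice, using a standard embedding routine as in the 4.8.8 case, and then replace each edge of the embedding by a \emph{wire} edge gadget and each degree-6 vertex by a vertex gadget. As before, the guiding principle is that the region enclosed by the wire gadgets in $G$ will coincide exactly with the graph obtained from $M$ by the breaking choices made while traversing the vertex gadgets; a Hamiltonian cycle will then exist in $G$ if and only if that enclosed region is connected and hole-free, i.e.\ if and only if breaking reduces $M$ to a tree.

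Next I would build the wire gadget as a strip of the 4.6.12 tessellation, composed of dodecagons, hexagons, and squares, whose boundary cycle is forced: every boundary edge of the strip must lie in any Hamiltonian cycle, so the only remaining freedom is how the two parallel \emph{strings} entering each vertex gadget are reconnected. As with the 4.8.8 wire, this gadget should be flexible enough to turn and shift so that the entire embedding of $M$ can be routed without collisions.

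The crux is the degree-6 vertex gadget. I would construct it so that six wires attach around a central region, each contributing two strings, and so that \emph{single-edge (pin) connections} between adjacent strings force exactly two families of traversals: a \emph{broken} solution in which the two strings of each incident wire are joined to one another, capping off each edge at the vertex so that it splits into six degree-one stubs, and an \emph{unbroken} solution in which strings of cyclically adjacent wires are joined, threading all six edges around a single enclosed (hexagonal) face and keeping the vertex intact. These two cases correspond precisely to breaking versus not breaking the vertex in $M$, exactly paralleling the 4.8.8 argument. Given this gadget, the equivalence between Hamiltonian cycles in $G$ and tree-producing breakings of $M$ follows from the same connectivity-and-hole-freeness reasoning used there.

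I expect the main obstacle to be the explicit construction and verification of the degree-6 vertex gadget within the rigid geometry of the 4.6.12 tessellation. The challenge is to place the pin connections so that, despite the larger number of incident strings, the gadget still admits only the intended broken and unbroken pairings, ruling out any stray reconnection that would visit every vertex while corresponding to neither a clean break nor a clean pass-through. Once the gadget's solution space is pinned down to these two cases, the rest of the proof is a routine transcription of the 4.8.8 argument.
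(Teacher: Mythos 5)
Your proposal matches the paper's proof essentially step for step: the same reduction from 6-Regular Breakable Planar TRVB, the same two-string wire gadgets with forced boundary traversal, the same pin-connection vertex gadget admitting only the broken and unbroken solutions, and the same connectivity-and-hole-freeness equivalence borrowed from the 4.8.8 case. The only detail worth flagging is that the paper routes the embedding of $M$ in the \emph{triangular} grid (natural for degree-6 vertices, with wires turning by $60^\circ$ and $120^\circ$), and supplies the explicit vertex gadget by figure, which is exactly the construction step you identified as the remaining obstacle.
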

\begin{proof}
We prove that the HCP in 4.6.12 Tessellation is NP-complete by reducing from the 6-Regular Breakable Planar Tree-Residue Vertex-Breaking problem. When constructing a grid graph $G$ in 4.8.8 tessellation based on the multigraph $M$, we first embed the multigraph in the triangular grid. Then, we use the vertex gadget shown in Figure~\ref{fig:4612vertex} for every vertex in $M$ and the edge gadget shown in Figure~\ref{fig:4612edge} for the edges in $M$. The edge gadget only includes the boundary vertices of the shape depicted in Figure~\ref{fig:4612edge}. Because the turning demonstrated in \ref{fig:4612edge} can have turning of $60$ and $120$ degrees, we can construct the induced subgraph $G$ based on the triangular grid embedding.\\
\begin{figure}[H]
\centering
\includegraphics[width=80mm]{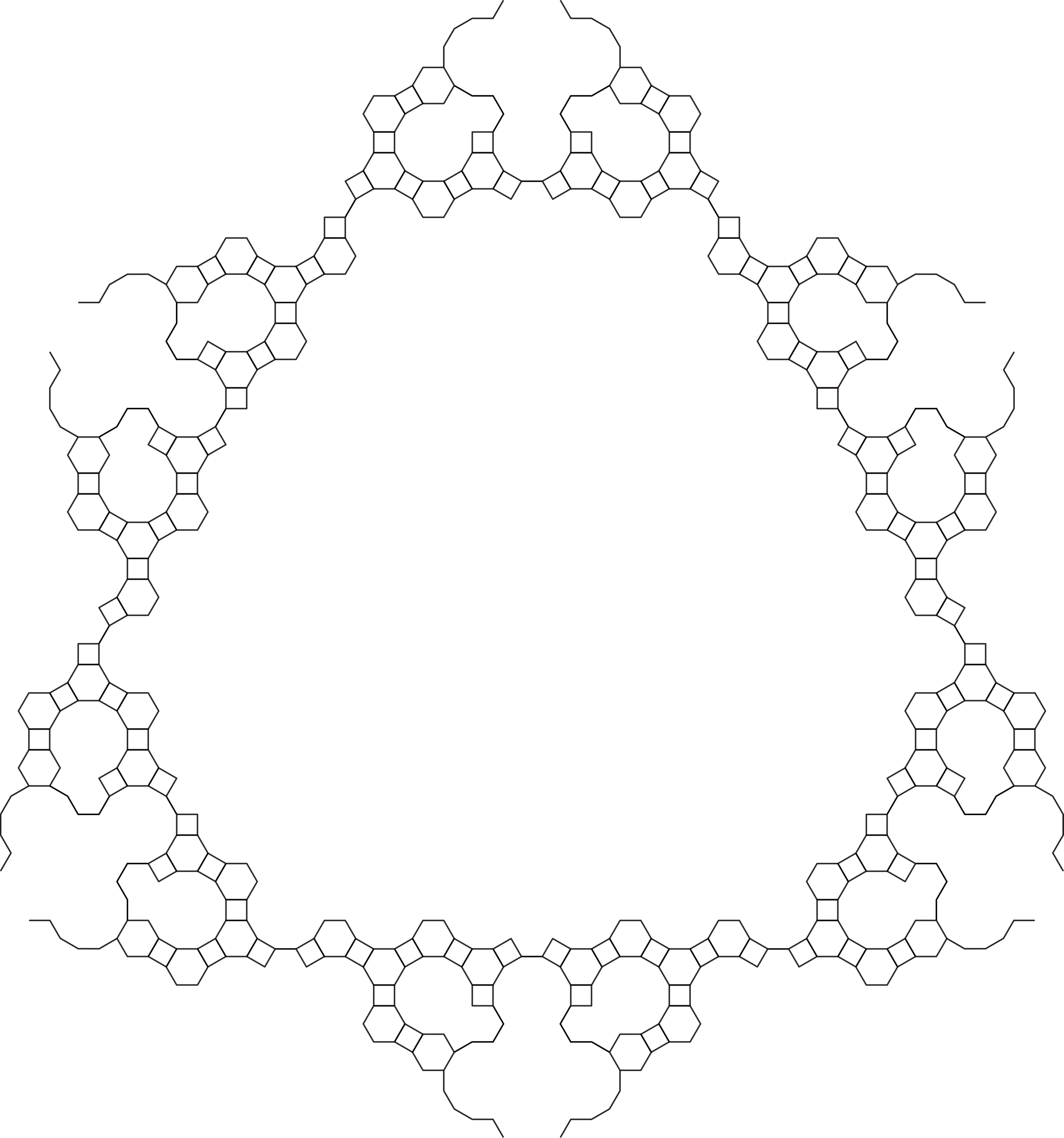}
\caption{Vertex gadget for 4.6.12}
\label{fig:4612vertex}
\end{figure}
\begin{figure}[H]
\centering
\includegraphics[width=80mm]{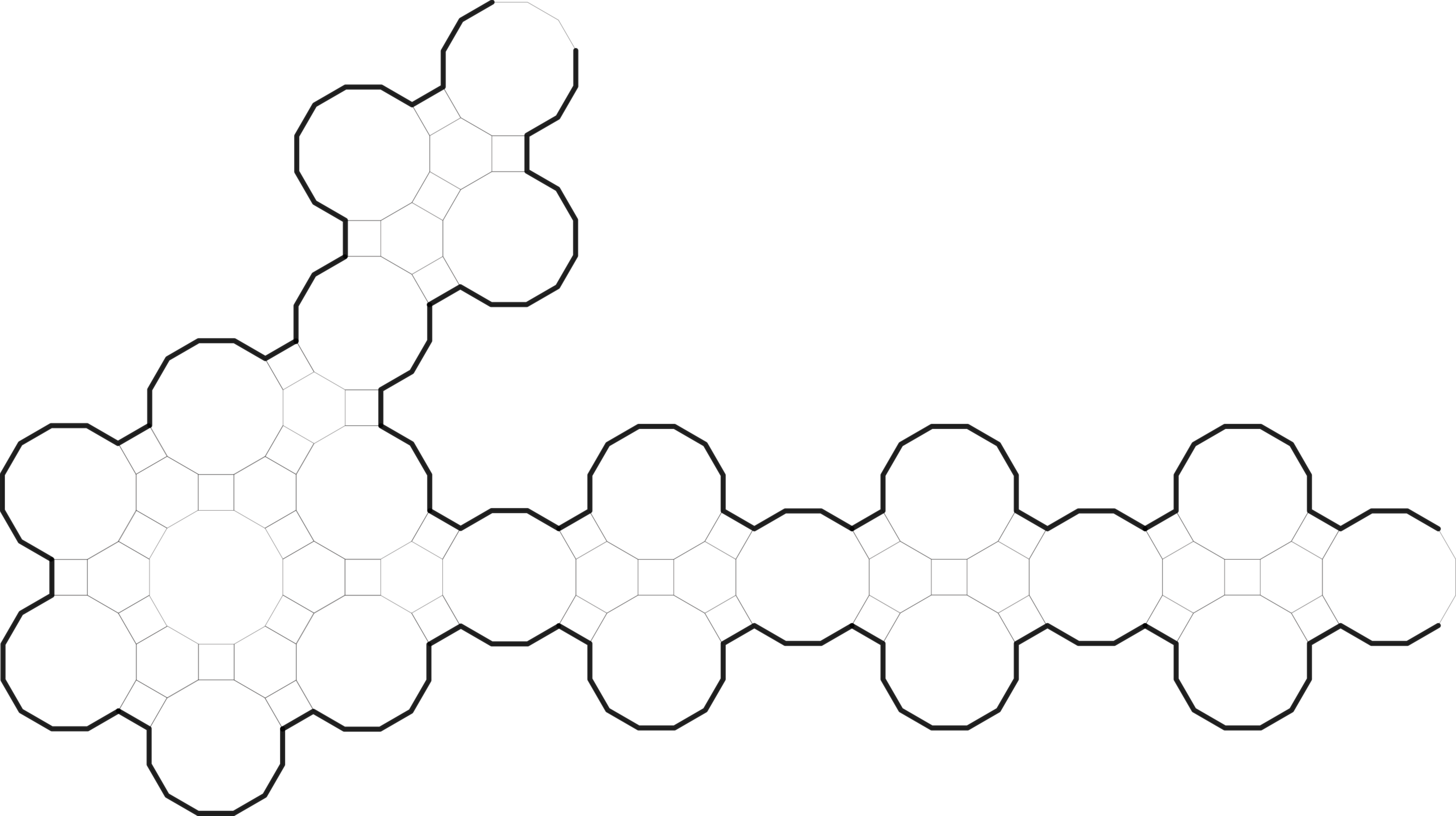}
\caption{Edge gadget with a turn for 4.6.12}
\label{fig:4612edge}
\end{figure}
Now, we will show why the constructed graph $G$ has a Hamiltonian cycle if and only if $M$ is breakable. The traversals of the edge gadgets of 4.6.12 tessellation are already set and the only freedom is in how to traverse the vertex gadgets. The six edge gadgets connect to the vertex gadget on the six sides and each edge gadget consists of two strings of vertices. As mentioned in the 4.8.8 tessellation, because of the single edge connections between each pair of adjacent strings, there are only two kinds of traversals for a vertex gadget: the one that has two strings of the same edge connected or the one that has two strings of two adjacent edges connected. The first kind is illustrated by the solution in Figure~\ref{fig:4612break}, which corresponds to a broken vertex in $M$. The second kind is illustrated by the solution in Figure~\ref{fig:4612notbreak}, which corresponds to an unbroken vertex in $M$. Just as the argument in 4.8.8 tessellation proof states, the region inside the edge gadgets represents the produced graph after breaking $M$.
\begin{figure}[H]
\minipage{0.5\textwidth}
\includegraphics[width=58mm]{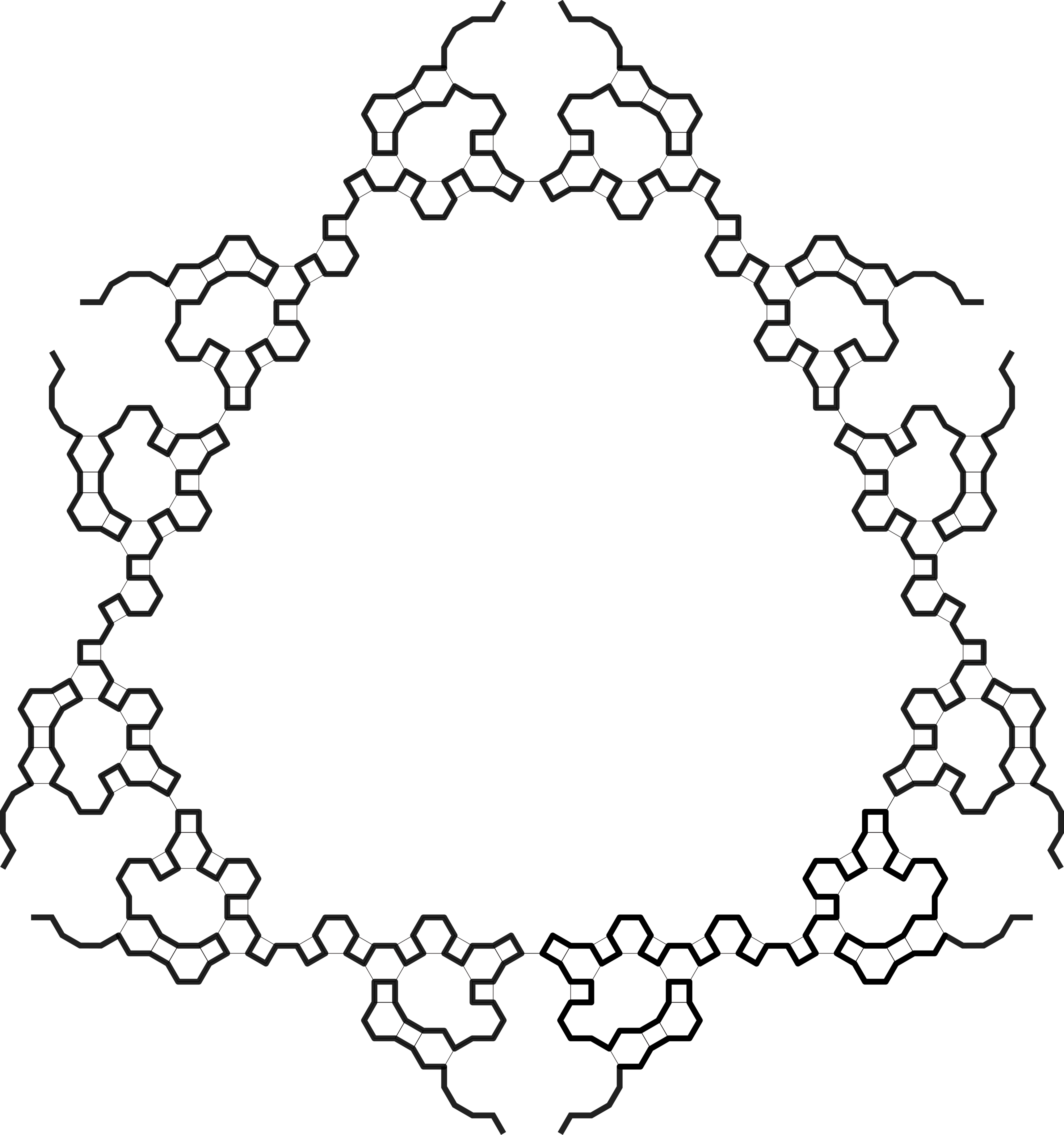}
\caption{A Broken Vertex}
\label{fig:4612break}
\endminipage
\minipage{0.5\textwidth}
\includegraphics[width=58mm]{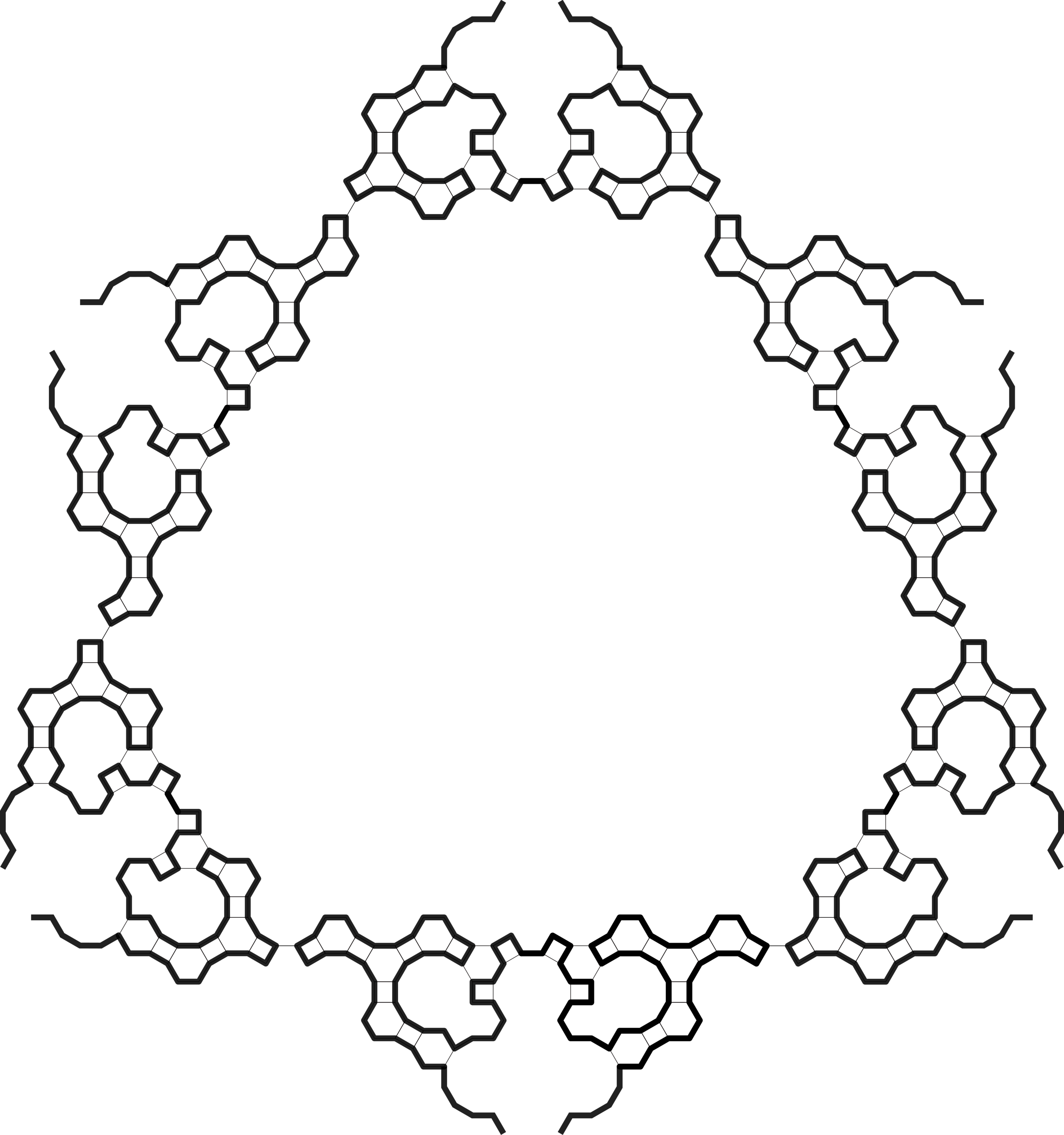}
\caption{A Unbroken Vertex}
\label{fig:4612notbreak}
\endminipage
\end{figure}
\end{proof}

\section{Hamiltonian Path with Turns}
\label{sec:turning}
In this section we explore whether grid graphs contain Hamiltonian cycles which turn at every vertex. In \cite{fekete1997angle}, Fekete and Woeginger give a near linear algorithm for finding Hamiltonian paths among a set of points in the plane when the path must turn by $90^\circ$ at every vertex. We show that this angle-restricted tour problem becomes NP-complete when generalized to 3D, even when we restrict to 3D square grid graphs whose height is only two vertices. We also show the complexity of finding always turning Hamiltonian cycles in triangular grids.

We also investigate a version where each vertex can be visited twice as long as no edges in the cycle overlap. We give linear time algorithms for finding always turning cycles, as well as double visiting cycles in solid square grid graphs. This question initially came to our attention as special cases of finding Hamiltonian cycles in the squares and octagons tessellation. There are clear reductions between various problems in this section and restricted versions of that problem in which all vertices around a square pixel are included if any one is included. Although this did not lead to our eventual hardness proof we found the problem to be interesting and well motivated on its own. One can look at this problem as mirroring a problem laid out in a grid where movement is reflected by barriers at 45 degree angles. One can also think of this as counterpart to the discrete milling problem, were turns happen to be much easier to perform than straight paths.

\subsection{Turning in 3D Square Grids is Hard}
\label{sec:cubicGrid}

The 3D square grid graph is the grid graph induced by the integer lattice in $\mathbb{R}^3$. It is a natural extension of the square grid, and thus it makes sense to see the impact of the turning restriction in this setting. Interestingly, the problem of finding a Hamiltonian cycle once again becomes NP-complete, even when the cubic grid is restricted to being two layers tall. Thus, only a little more maneuverability is needed for this problem to once again become hard.

Turning is once again defined to match the intuitive notion. Each vertex now contains three pairs of opposite potential edges, and a turn in a path cannot contain both edges in the pair.

\begin{theorem}
The Always Turning Hamiltonian Path problem in cubic grid graphs is NP-complete even if the height of the grid is restricted to be $2$ vertices.
\end{theorem}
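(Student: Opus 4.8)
The plan is to prove NP-completeness by a gadget reduction. Membership in NP is immediate: a claimed path is a polynomial-size certificate, and one checks in polynomial time that it visits every vertex exactly once and that at each internal vertex the two incident path-edges lie along different coordinate axes (the turning condition). For hardness I would reduce from a known hard Hamiltonicity problem on a planar structure — the Hamiltonian cycle problem in $(2D)$ square grid graphs, which the 1982 result shows is NP-complete, or in planar max-degree-3 bipartite graphs, the source used elsewhere in this paper.

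The observation that makes height $2$ convenient is a local characterization of the turning constraint. Writing a vertex as $(x,y,z)$ with $z\in\{0,1\}$, the only straight-through moves would be along an axis offering edges in both directions; but in the $z$-direction only one vertical edge exists at each vertex, so a vertical edge can never be traversed straight. Hence at every vertex the path either stays inside a layer and makes a genuine perpendicular $x$–$y$ turn, or it uses the vertical edge together with exactly one horizontal edge. This yields a clean \emph{wire} gadget: a path can travel in the $+x$ direction while turning everywhere via the zigzag $(i,0,0)\to(i,0,1)\to(i{+}1,0,1)\to(i{+}1,0,0)\to(i{+}2,0,0)\to\cdots$, which visits both layers, turns at each vertex, and is in fact the unique always-turning way to transport the path along a $1\times 2$ strip — at each bottom (resp.\ top) vertex the ban on using both $\pm x$ edges forces the vertical edge to be used, pinning down the zigzag. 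Wires of this form, together with corner pieces realizing $90^\circ$ turns in the plane, let me embed the source instance: I scale up a planar grid embedding, replace each edge by such a wire, and replace each vertex by a junction gadget.

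The heart of the reduction is the junction gadget placed at each source vertex. Because the source path enters a vertex on one incident edge and leaves on another, the junction must admit an always-turning traversal, covering all of its vertices exactly once, that connects any chosen pair of incident wires — including the ``straight-through'' pairing of two opposite wires, which here cannot be realized by going straight and must instead be routed by dipping through the second layer and back. I would design a small height-$2$ block and verify, by finite case analysis, two properties: \emph{completeness}, that for every pair of incident wires there is an always-turning Hamiltonian traversal of the block joining exactly that pair, and \emph{soundness}, that these are the only ways an always-turning path can enter, fully cover, and leave the block. Given such gadgets the correspondence is routine: a Hamiltonian path in the source is transported wire-by-wire and junction-by-junction into an always-turning Hamiltonian path, and conversely the forced wire traversals mean any always-turning Hamiltonian path in the grid projects onto a Hamiltonian path in the source. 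For the path version one additionally fixes where the two endpoints sit, cutting a source cycle at a designated spot — a minor bookkeeping step.

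The main obstacle I anticipate is exactly the junction gadget: under the always-turning restriction the path has very little freedom, so it is delicate to build a single block in only two layers that simultaneously (i) can be fully covered by a turning path realizing every needed pairing, (ii) admits no ``shortcut'' traversal skipping vertices or splitting its coverage across multiple passes, and (iii) forces the incident wires to behave as transport wires rather than leaking the path out through unintended turns. Establishing soundness — that the gadgets and wires leave no spurious always-turning Hamiltonian path — will require the most careful argument, and is where a full case analysis of how a turning path may cross each gadget is unavoidable.
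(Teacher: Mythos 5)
Your overall strategy---NP membership plus a gadget reduction from Hamiltonian cycle/path in planar max-degree-3 graphs, with wires for edges and junction blocks for vertices---matches the paper's. But there is a genuine gap at the heart of your reduction: you have no mechanism for covering the wires of edges that the source Hamiltonian path does \emph{not} use. In any reduction of this kind, every vertex of the constructed grid graph must be visited, including every vertex of every gadget built for an edge that ends up untaken; since a degree-3 vertex on a Hamiltonian cycle always leaves one incident edge unused, this situation is unavoidable. The paper handles it the way the 1982 square-grid proof does: its edge gadget (a sequence of $2\times2\times2$ cubes) admits two traversals, a ``cross path'' that runs end to end (taken edge) and a ``return path'' that enters and leaves at the same end (untaken edge), and the even/odd vertex distinction together with the $1\times2\times1$ pin connections controls where return paths may attach. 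Your wire is a $1\times2$ strip whose zigzag traversal you argue is the \emph{unique} always-turning traversal; that very uniqueness proves your wire admits no return path, so whenever the source solution omits an edge, the corresponding wire's vertices cannot be covered, and the forward direction of your reduction (source Hamiltonian path $\Rightarrow$ always-turning Hamiltonian path in the grid) fails.

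To repair this you would have to fatten the wires so that they support both a cross traversal and a return traversal under the always-turning constraint, and then redo the soundness analysis so that return paths cannot be abused---this is exactly what forces the bipartite/parity bookkeeping and the pin connections in the paper, where the even vertex gadgets' forced $1\times2\times1$ attachments plus max degree 3 guarantee that each vertex gadget sees exactly two cross paths. At that point you would have essentially reconstructed the paper's gadgets, so the fix is not a small patch but a redesign of the wire. Your junction completeness/soundness program is otherwise the right shape, and your observation that a vertical edge in a height-2 grid can never be traversed straight is correct and is implicitly what makes the paper's gadgets work.
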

\begin{proof}
We closely follow the proof that deciding if square grid graphs admit a Hamiltonian path is NP-complete. We reduce from deciding whether planar max-degree 3 graphs admit a Hamiltonian path. We also construct edge gadgets and even and odd vertex gadgets. In this subsection, all figures are two vertices high with paths on the bottom layer represented by black lines and paths on the top layer represented by dotted blue lines.

Edge gadgets are sequences of $2\times2\times 2$ cubes. They admit a forward path, representing an edge taken in the graph, shown in Figure~\ref{fig:EdgePathThrough}. They also admit a return path, shown in Figure~\ref{fig:EdgePathReturn}, representing an edge not taken in the graph. 
The edges can be turned, as shown in Figures~\ref{fig:TurnThrough} and \ref{fig:TurnReturn}.

\begin{figure}[!tbp]
  \centering
  \begin{minipage}[b]{0.4\textwidth}
    \includegraphics[width=\textwidth]{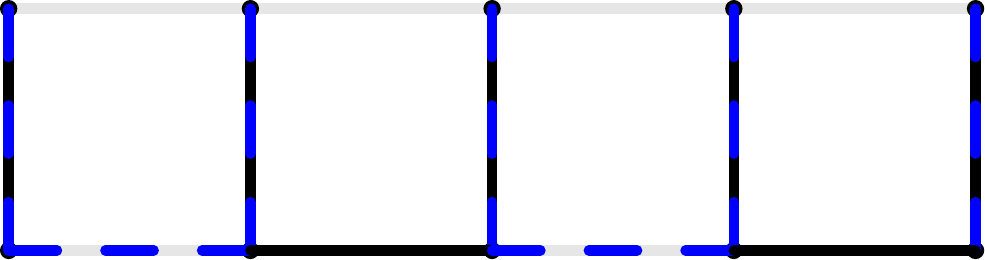}
    \caption{An edge taken in the simulated graph. The path here starts on one side and ends on the other}
    \label{fig:EdgePathThrough}
  \end{minipage}
  \hfill
  \begin{minipage}[b]{0.4\textwidth}
    \includegraphics[width=\textwidth]{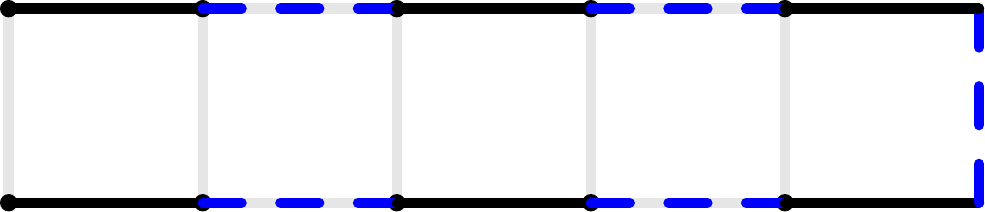}
    \caption{An edge not taken in the simulated graph. The path starts and ends both on the left side.}
    \label{fig:EdgePathReturn}
  \end{minipage}
  \\
    \begin{minipage}[b]{0.4\textwidth}
    \includegraphics[width=\textwidth]{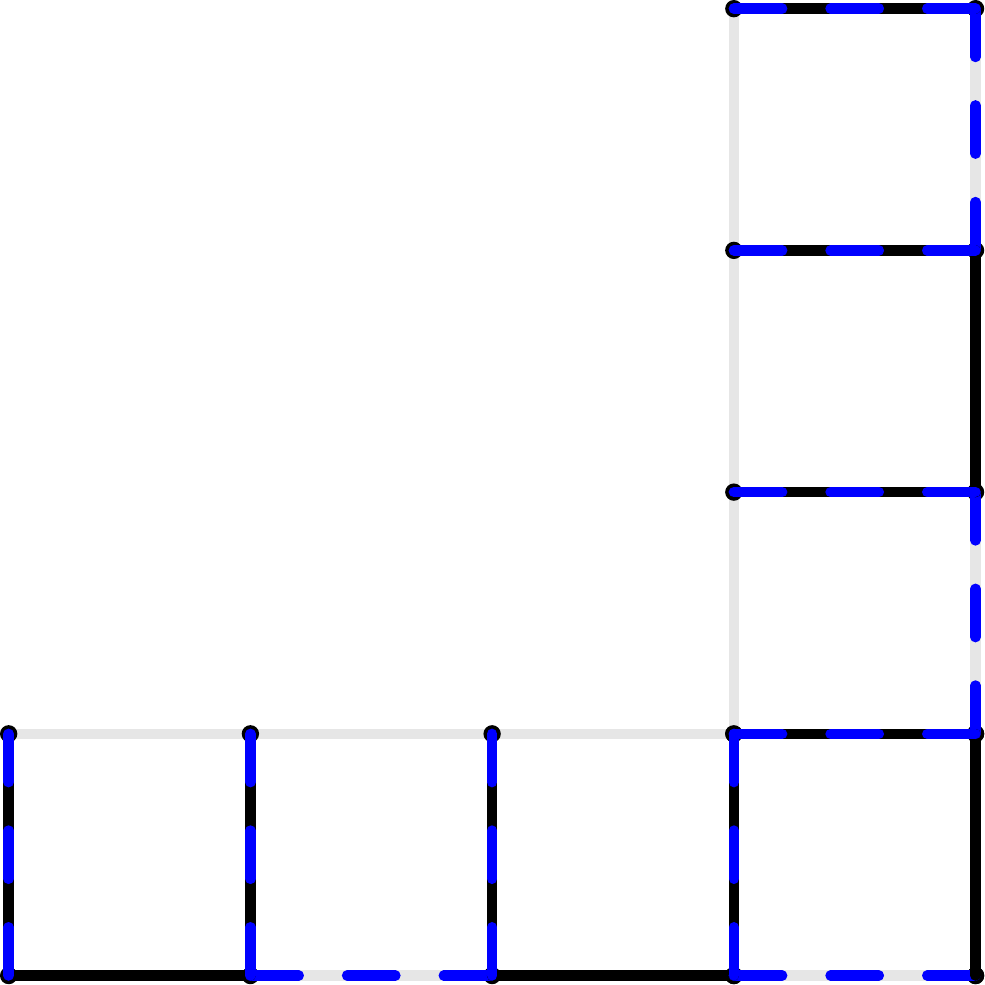}
    \caption{A taken edge being turned.}
    \label{fig:TurnThrough}
  \end{minipage}
  \hfill
  \begin{minipage}[b]{0.4\textwidth}
    \includegraphics[width=\textwidth]{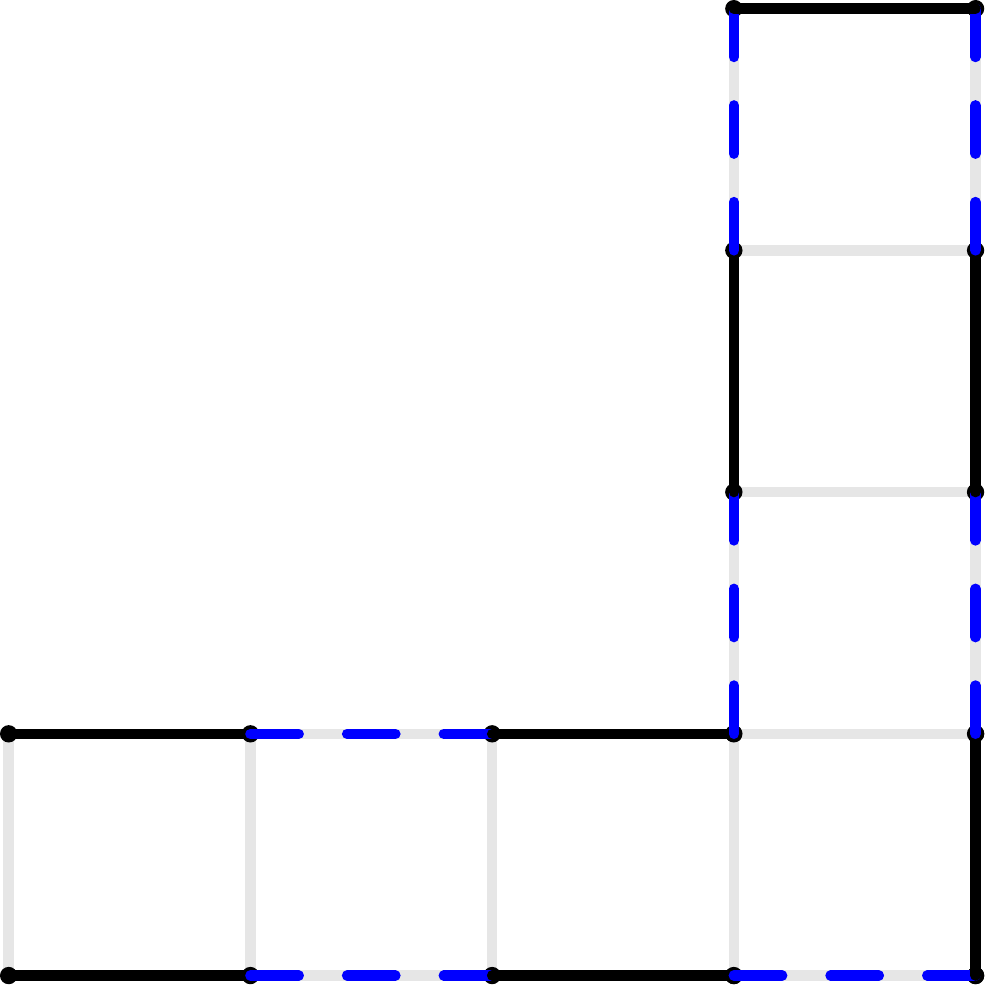}
    \caption{A non-taken edge being turned.}
    \label{fig:TurnReturn}
  \end{minipage}
\end{figure}

Vertex gadgets are represented by $4\times8\times2$ rectangles. Edges attach across the marked edges $e_1$ to $e_4$. Figures~\ref{fig:VertexPath1}, \ref{fig:VertexPath2}, and \ref{fig:VertexPath3} show three different paths through a vertex which will connect any three of the four target edges. Unlike the original grid proof, it is critical that the problem we are reducing from is max-degree 3.  Even vertex gadgets attach to the edge gadgets by a $1\times2\times1$ pair of vertices. If a path enters this pair of vertices, it is then forced to take the edge connecting them. Thus in an even vertex the path can only pass from the vertex to each edge a single time. Since it is max-degree 3, this means precisely two of the adjacent edge gadgets are taken and one is not. Since every odd vertex is connected to an even vertex, this means the odd vertex gadgets must also have precisely two of their adjacent edge gadgets have a taken path. Thus there is only a Hamiltonian cycle if the original graph admits a Hamiltonian cycle.

\begin{figure}[!tbp]
  \centering
  \begin{minipage}[b]{0.3\textwidth}
    \includegraphics[width=\textwidth]{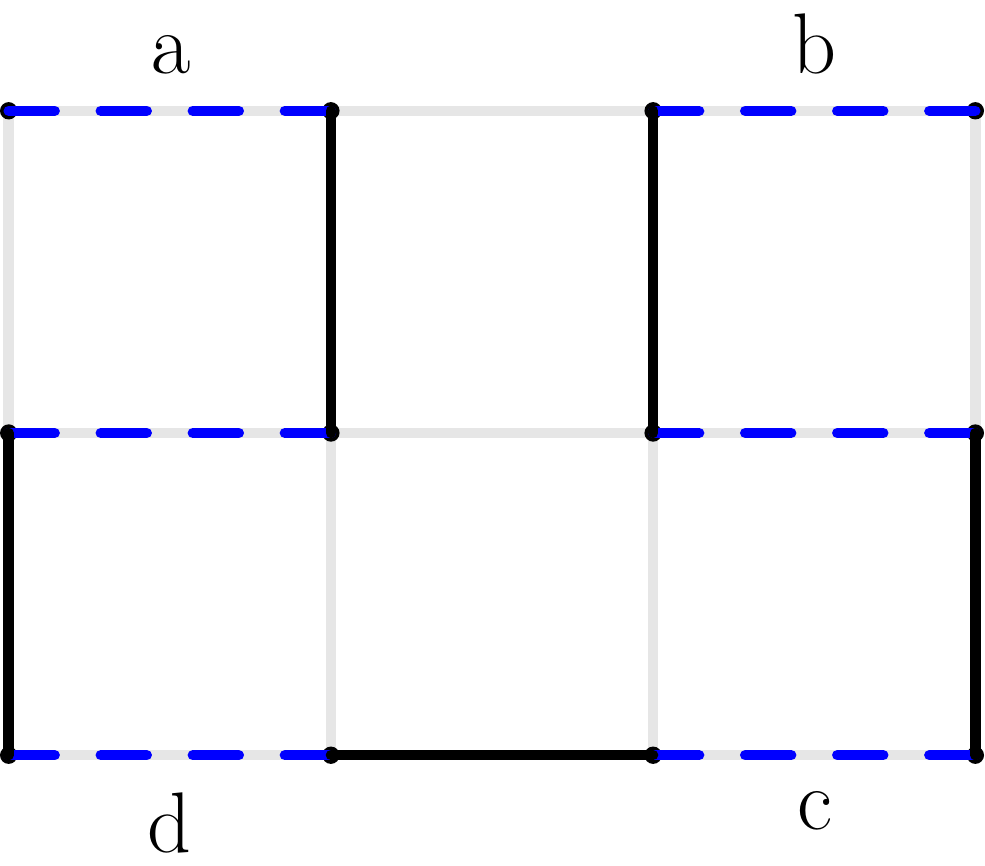}
    \caption{Enters a, crosses c, leaves b.}
    \label{fig:VertexPath1}
  \end{minipage}
  \hfill
  \begin{minipage}[b]{0.3\textwidth}
    \includegraphics[width=\textwidth]{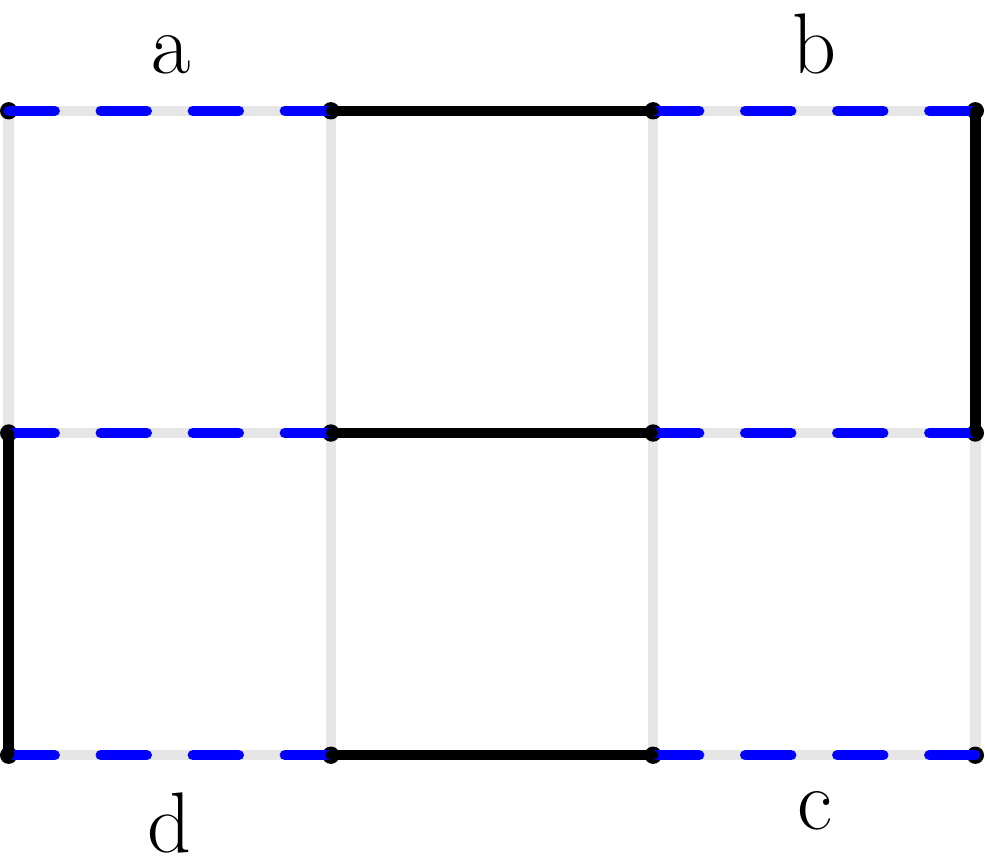}
    \caption{Enters a, crosses b, leaves c}
    \label{fig:VertexPath2}
  \end{minipage}
  \hfill
    \begin{minipage}[b]{0.3\textwidth}
    \includegraphics[width=\textwidth]{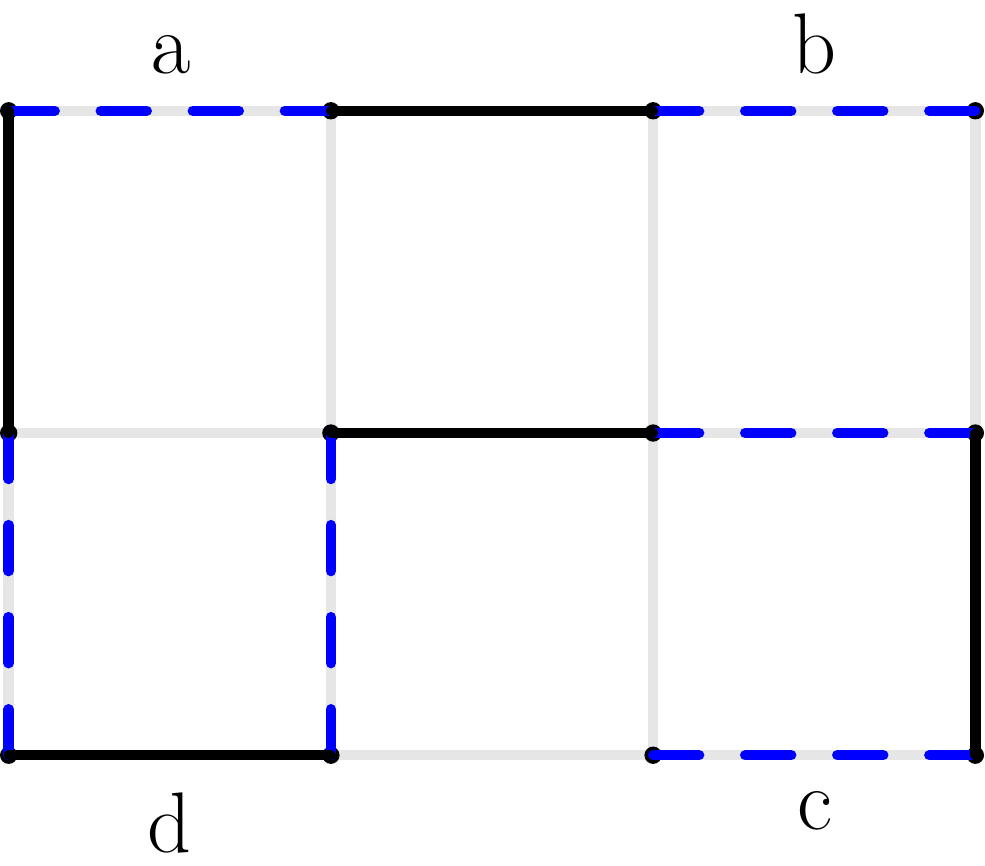}
    \caption{Enters b, crosses a, leaves c.}
    \label{fig:VertexPath3}
  \end{minipage}
\end{figure}

\end{proof}

\subsection{Double Turning in Solid Graphs}
\label{sec:doubleTurning}
Initially inspired by the Octagon and Squares we consider the following problem. We define the Double Turning Hamiltonian Cycle Problem to be the following: Given a square grid graph, does there exist a cycle in that graph which visits every vertex at least once, never traverses an edge more than once, and turns at every vertex? In particular, this allows the path to visit degree 4 vertices twice, taking a different turn each time. If we consider the square and octagon tessellation in which we have every vertex around a square pixel if any vertex is present around that pixel, then one can see these are equivalent problems.

This section will begin by observing some useful properties of the Hamiltonian path. Then we will connect those to properties of the graph to show that these graphs have a property we call a checkering. Next, we demonstrate that spanning trees of the checkering correspond to Hamiltonian cycles in our graph. Finally, we argue that all of these properties can be checked in polynomial time.

\begin{lemma}
\label{lem:allFour}
If a solid graph admits a double turning Hamiltonian cycle, it also admits such a cycle where all degree 4 nodes are visited twice.
\end{lemma}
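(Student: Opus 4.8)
The plan is to start from an arbitrary double turning Hamiltonian cycle $C$ and repeatedly modify it, increasing the number of twice-visited degree $4$ vertices, until none is visited only once. First I would record what the turning constraint forces at each vertex. A degree $2$ vertex must use both of its edges, and these must be perpendicular (else the cycle would go straight), so it is always visited once with no slack. A degree $3$ vertex can only be visited once: two of its edges are used and form a turn, and exactly one edge is left unused. A degree $4$ vertex is either visited once (two used edges forming a turn, and the complementary perpendicular pair left unused) or twice (all four edges used). Let $H$ be the subgraph of \emph{unused} edges. The count above shows $\deg_H \le 2$ everywhere: it equals $2$ exactly at the degree $4$ vertices visited once, $1$ at the degree $3$ vertices, and $0$ elsewhere. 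Hence $H$ is a disjoint union of simple paths and cycles, its interior ($H$-degree $2$) vertices are precisely the degree $4$ vertices we wish to eliminate, and its path endpoints are the degree $3$ vertices. The goal is therefore to reduce $H$ to a matching on the degree $3$ vertices.

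The main engine is absorbing an $H$-cycle. Every vertex of an $H$-cycle is a degree $4$ vertex visited once, and at such a vertex the two unused edges form exactly the turn complementary to the used pair; so adding all edges of the $H$-cycle to $C$ both preserves the parity of every used-degree and makes each of these vertices twice-visited by two legal turns. The one thing to check is that the result is still a \emph{single} cycle. The added loop and $C$ meet at each of these vertices, which now has four used edges and hence a free choice of how to pair them into two turns. Pairing old-with-old and new-with-new at every shared vertex leaves two disjoint closed curves, namely $C$ and the added loop; switching to the other legal pairing at a single shared vertex splices them into one, exactly as re-routing one transition at a crossing of two closed curves changes the number of curves by one. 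Both pairings are turns, so turning is preserved, and since we only added edges every vertex is still visited. Thus after this move we again have a valid double turning Hamiltonian cycle, now with strictly more twice-visited vertices, and repeating I may assume $H$ has no cycles.

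The remaining and genuinely hard case is an $H$-path of length at least two, whose interior degree $4$ vertices cannot be absorbed by the previous move: adding all edges of such a path flips the parity at its two degree $3$ endpoints, and a degree $3$ vertex can never carry three used edges in a turning cycle. To handle this I would use the structural fact that at a degree $3$ vertex the inward edge is forced to be used (the two collinear boundary edges cannot both be used without going straight), so its unique unused edge runs along the boundary of the solid region; consequently each $H$-path runs along the region boundary and its interior vertices are reflex corners. The proposed fix is to absorb such a path by toggling $C$ along a closed alternating cycle $Z$ made of the unused path edges together with a return route of used edges from one endpoint back to the other, chosen to avoid the path's interior vertices; toggling $Z$ adds the path edges (doubling every interior vertex), removes the return edges, and leaves all used-degrees even. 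I expect the crux of the whole proof to lie exactly here: guaranteeing that such a return route exists and that toggling it keeps the cycle connected and still visiting every vertex, rather than splitting off a component or deleting some vertex's last pair of edges. The degree $2$ bookkeeping and the verification that the $H$-cycle move terminates are routine once this $H$-cycle engine and the boundary analysis are in place.
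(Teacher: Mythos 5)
Your decomposition of the unused subgraph $H$ and your cycle-absorption move are both sound, and the absorption step is actually cleaner than the paper's: the paper instead asserts that every unused cycle must be a single pixel and then reroutes the cycle into that pixel (the rerouting is the same one-vertex re-pairing you describe), whereas your transition-switching argument absorbs an arbitrary unused cycle with no classification needed. On the $H$-cycle side you are, if anything, ahead of the paper.

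The genuine gap is exactly where you flagged it: the $H$-paths. The paper does not repair these at all; it argues they cannot exist. Since the graph is solid, degree-3 vertices lie on the outer boundary, and the paper claims an unused path joining two boundary vertices would disconnect part of the graph, so it can never arise in a valid cycle; what survives is only a matching on degree-3 vertices plus unused cycles, which are then absorbed. Your plan goes the other way (transform the path away by a toggling move), and you correctly identify the existence of the return route as the crux. But this step cannot be completed, because the statement actually fails on this case. Take the L-shaped solid grid graph whose vertices are the integer points of $([0,7]\times[0,3])\cup([0,3]\times[0,7])$, with reflex corner $(3,3)$. At a degree-3 vertex the unused edge must be one of its two collinear boundary edges, so along each straight boundary stretch the edges strictly alternate used/unused, starting used at each degree-2 corner. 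The stretch from $(7,3)$ to $(3,3)$ has four edges, forcing $(4,3)$--$(3,3)$ to be unused; symmetrically $(3,4)$--$(3,3)$ is unused. Hence $(3,3)$ is visited only once in \emph{every} double turning Hamiltonian cycle of this graph. Yet such a cycle exists: the forced unused edges are twelve boundary edges; deleting them leaves a connected graph in which every vertex has even degree, every vertex's remaining edges pair into $90^\circ$ turns, and distinct closed curves of any turn-respecting pairing can be merged at a shared twice-visited vertex, exactly as in your splicing step. So this solid graph admits a double turning Hamiltonian cycle but none in which all degree-4 vertices are visited twice.

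The same example shows the paper's disconnection claim is unsound for unused paths that hug the boundary around a reflex corner (they disconnect nothing), so the paper's proof breaks at the identical spot. In short: your honest flag at the path case is not a routine loose end, it is the point where your proposal, the paper's argument, and indeed the lemma as stated all fail; no choice of return route or any other repair can exist for the graph above.
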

\begin{proof}
If a degree 4 node has only one visit then two adjacent edges must be in the path and two adjacent edges must not be in the path. Let us consider some properties of the empty edges. Since each edge must have a partner in the node, then each `path' of empty edges must either connect to degree 3 nodes or be in a cycle. Degree 3 nodes only occur on the boundary since this is solid, so if we have a path from one degree 3 node to another, it has disconnected part of our graph and thus cannot be part of a valid solution. Thus the empty edges must form a cycle. If this cycle contains any nodes on its interior, then those nodes are disconnected and the cycle cannot be part of a valid solution. Finally a node cannot be visited by an empty path more than once, otherwise it is never visited in the actual path. The only cycles in a grid which obey these constraints are single pixels. If there is a pixel without edges, then we pick one vertex arbitrarily to extend the cycle into the pixel. This is shown in Figures~\ref{fig:missingPixel} and \ref{fig:addedPixel}.
\end{proof}

\begin{figure}[h]
\begin{minipage}[b]{.45\textwidth}
\centering
	\includegraphics[width=.9\textwidth]{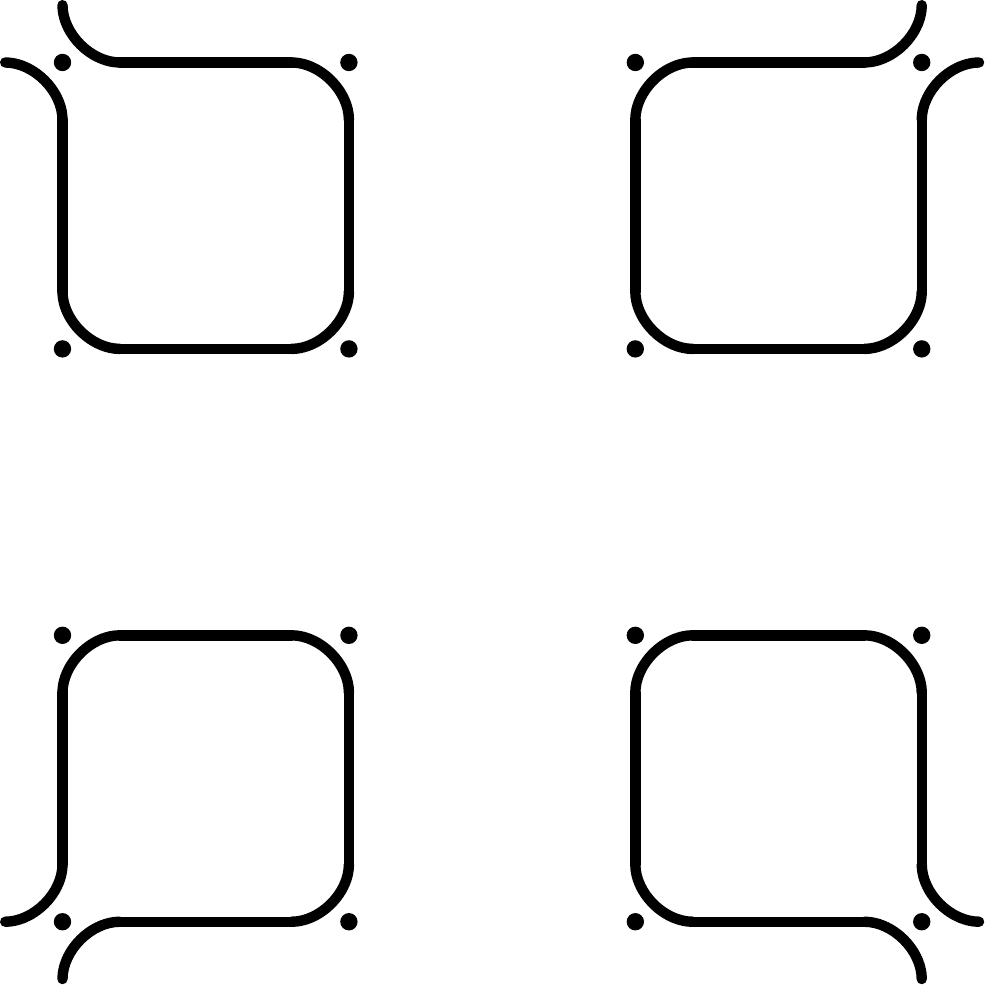}
	\caption{A pixel with all four edges not included in the path.}
	\label{fig:missingPixel}
\end{minipage}
\begin{minipage}[b]{.45\textwidth}
\centering
	\includegraphics[width=.9\textwidth]{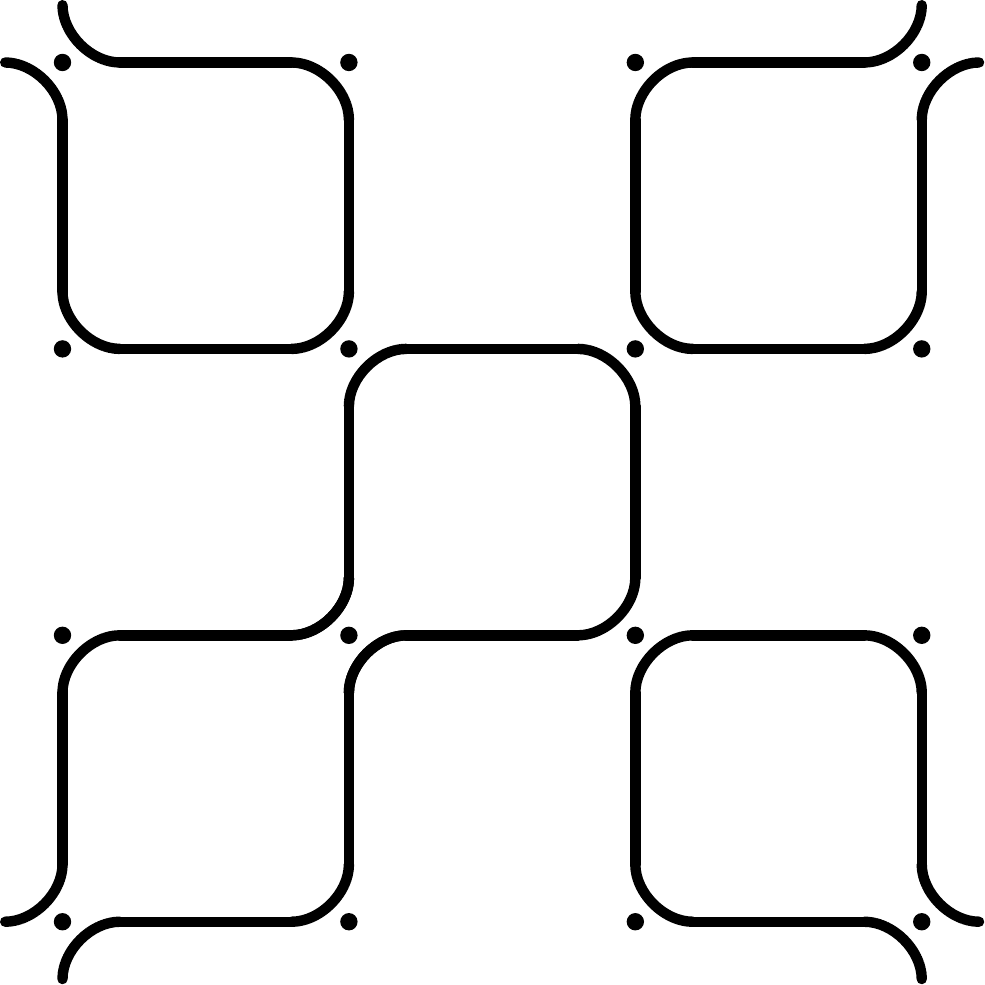}
	\caption{We can augment the path to visit each vertex twice..}
	\label{fig:addedPixel}
\end{minipage}
\end{figure}

With this lemma we can now restrict our examination to the case where all degree four nodes are visited twice. 

Next, we notice that the graph must have even parity on all external boundaries by the argument in the single turning place. Given this parity constraint and that the graph is solid, we know that if the graph contains a Hamiltonian cycle then it is composed of some number of full pixels, possibly connected at the corners\footnote{These corner connections are local cuts and what prevent this graph from being categorized as polygonal.}.
We now wish to consider an alternate view of this grid graph. Call the \emph{checkerboard} of this graph the set of alternating pixels in the graph starting with the upper left. We call the other pixels the odd checkering. From the prior properties, we know that if the graph contains a Hamiltonian cycle, then every vertex boarders at least one pixel in the checkerboard. 

Now we will imagine connecting the pixels in the checkerboard and show that the existence of a Hamiltonian cycle depends on its properties. Consider the degree 4 vertices, all of which are visited twice by our prior lemma. There are two configurations of paths, each one connects two diagonally adjacent pixels and separates the other two. We can now think of every degree 4 vertex of our graph as either connecting two adjacent checkered pixels or two adjacent odd checkered pixels. Our next lemma will consider this connectivity graph of the checkering.

\begin{figure}[h]
\centering
	\includegraphics[width=.8\textwidth]{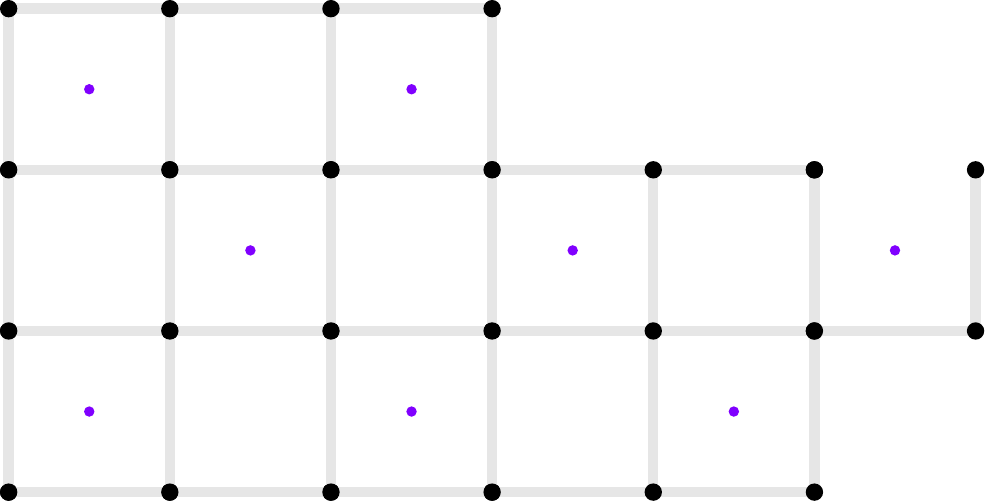}
	\caption{An example solid grid graph with its checkering in purple.}
	\label{fig:exampleCheckering}
	\includegraphics[width=.8\textwidth]{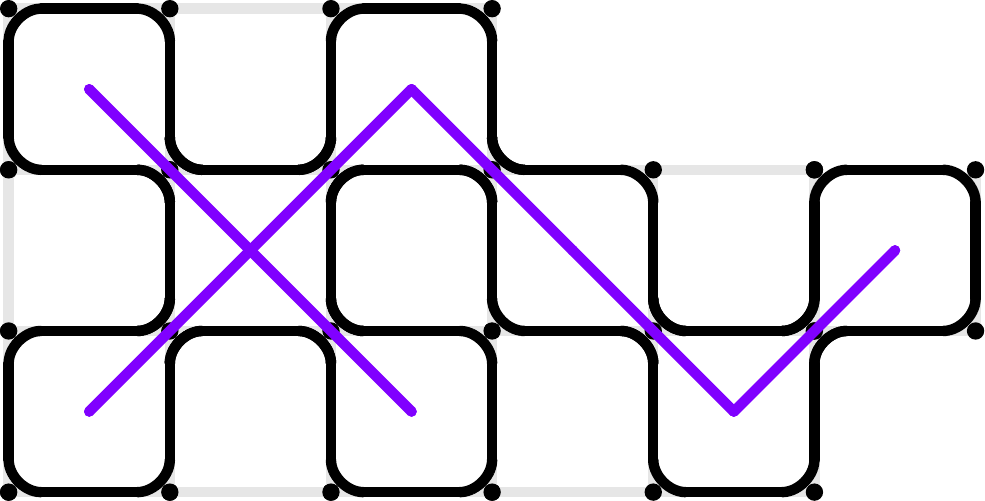}
	\caption{The solution to the example with the spanning tree of its checkering in purple.}
	\label{fig:exampleCheckeringSolved}
\end{figure}

\begin{lemma}
\label{lem:doubleTree}
The Double Turning Hamiltonian Cycle Problem in solid grid graphs admits a Hamiltonian cycle if and only if it has a valid checkering and that checkering admits a spanning tree.
\end{lemma}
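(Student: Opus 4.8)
The plan is to exhibit an explicit correspondence between valid double-turning Hamiltonian cycles and spanning trees of the checkering, and to read the biconditional off that correspondence. By Lemma~\ref{lem:allFour} I may assume every degree-$4$ vertex is visited twice, and by the parity and solidity discussion above the graph is a union of full pixels, so each degree-$4$ vertex $v$ is the common corner of two checkerboard pixels and two odd-checkering pixels arranged diagonally. The key reframing is to view any such cycle as the boundary $\partial W$ of the region $W$ obtained by \emph{merging} checkerboard pixels: at each degree-$4$ vertex we either open the corner between its two diagonal checkerboard pixels (a white-connection) or between its two odd pixels (a black-connection), and these are complementary choices. I would first record two structural facts that make this view work. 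Since checkerboard pixels pairwise share no edge, every edge of a checkerboard pixel lies on $\partial W$; consequently $\partial W$ automatically turns at every vertex (a straight pass through a vertex would require two edge-adjacent pixels of the same color, which cannot happen), and it visits exactly the vertices that border a checkerboard pixel. Second, the white-connections define a graph $H$ on the checkerboard pixels — the connectivity graph of the checkering — whose behavior is dual to the complementary black-connections on the odd pixels.

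For the ``if'' direction, suppose the checkering is valid (every vertex borders a checkerboard pixel) and $H$ has a spanning tree $T$. I would set a white-connection at each degree-$4$ vertex corresponding to an edge of $T$ and a black-connection at every other degree-$4$ vertex, while the turns at degree-$2$ and degree-$3$ vertices are forced. Connectivity of $T$ makes the merged region $W$ connected, and acyclicity of $T$ prevents any odd pixel from being enclosed, so $W$ is connected and hole-free and $\partial W$ is a single simple closed curve. By the two structural facts this curve turns at every vertex, reuses no edge, and — using validity — passes through every vertex, hence is a double-turning Hamiltonian cycle.

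For the ``only if'' direction, I would start from a double-turning Hamiltonian cycle, normalized by Lemma~\ref{lem:allFour}, and read off at each degree-$4$ vertex whether it white-connects or black-connects, obtaining a subgraph $T \subseteq H$. Since the cycle is a single closed curve visiting every vertex, $W$ must be connected (otherwise distinct components of $W$ would contribute disjoint boundary curves) and hole-free (otherwise an enclosed odd region would contribute a second boundary curve), which forces $T$ to be connected and acyclic, i.e.\ a spanning tree; validity of the checkering is exactly the already-established property that every vertex borders a checkerboard pixel.

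The main obstacle is making the topological equivalence ``$\partial W$ is one simple closed curve $\iff$ $T$ is a spanning tree of $H$'' fully rigorous, which is cleanest to phrase through planar tree--cotree duality: the white-connection graph and the black-connection graph are planar duals on the checkerboard lattice, one is a spanning tree exactly when the other is, and this is precisely the condition for a single boundary component. The care needed is in handling the outer face — treating the exterior as an additional odd region (or adjoining an apex to the boundary odd pixels) so the duality is exact — and in confirming that the even-parity and solidity hypotheses guarantee the boundary behaves well there; the interior combinatorics, by contrast, are routine once the merging picture is fixed.
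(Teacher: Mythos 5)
Your high-level architecture is the same as the paper's: the same connectivity graph of the checkering, the same normalization via Lemma~\ref{lem:allFour}, the same dichotomy at degree-$4$ vertices, and your ``if'' direction (the boundary of the region obtained by gluing checkered pixels along spanning-tree connections) is exactly the paper's Euler-tour-around-the-spanning-tree construction phrased topologically. That direction is sound modulo the duality bookkeeping you flag.

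The genuine gap is in your ``only if'' direction, and it is not the obstacle you identified (tree--cotree duality and the outer face, which is indeed routine). Your argument silently identifies the given Hamiltonian cycle with $\partial W$: the inferences ``distinct components of $W$ would contribute disjoint boundary curves'' and ``an enclosed odd region would contribute a second boundary curve'' contradict the cycle being a single closed curve only if the cycle \emph{is} the union of those boundary curves. But your two structural facts are facts about $\partial W$, not about an arbitrary normalized cycle, and the degree-$4$ read-off controls nothing at degree-$2$ and degree-$3$ vertices. A priori the cycle could skip an edge of a checkered pixel whose endpoints both have degree $3$ (a boundary edge of a checkered pixel flanked by two odd pixels) and instead use an edge lying only on an odd pixel; then the cycle is not $\partial W$ and your topological argument says nothing. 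Closing the gap means proving that a normalized double-turning Hamiltonian cycle uses \emph{every} edge of \emph{every} checkered pixel. This is true but requires real work: a count shows the unused edges form a perfect matching on the degree-$3$ vertices; the turning constraint forces every \emph{interior} checkered edge with degree-$3$ endpoints to be used (the other two edges at such an endpoint are collinear); and a propagation argument along the outer boundary, started at a degree-$2$ corner (which by validity lies on a checkered pixel and has both edges used), shows that used/unused status stays synchronized with checkered/odd status, since both flip exactly at degree-$3$ boundary vertices. It is worth noting that the paper's proof is structured precisely to avoid needing this fact: its cut argument only uses that the cycle cannot cross a curve drawn through the interiors of odd pixels and through black-connected degree-$4$ vertices, which follows from the degree-$4$ read-off alone (though the paper, too, is informal where such a cut exits to the outer face).
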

\begin{proof}

First, we will prove that we can construct a Hamiltonian cycle from a spanning tree of a checkering of the graph. To do so, we will simply visit each of the vertices in an Euler tour order around the spanning tree. Each vertex in the original graph corresponds to a potential edge location in the checkering. We use this term loosely as there may not be vertices in the checkering to connect to. Around each pixel we give the vertices a clockwise ordering. From a vertex we check if that vertex corresponds to an edge in the checkering spanning tree. If not we move clockwise around our current pixel. If it is an edge, we instead consider the pixel we connect to to be our current pixel and move clockwise around that one. We know that every vertex is adjacent to exactly one or two pixels in the checkering and accordingly is visited either once or twice. This process creates a path which never crosses the spanning tree and is free to continue around the entire spanning tree, thus resulting in a single cycle as desired. An example with a checkering, spanning tree, and path can be seen in Figures~\ref{fig:exampleCheckering} and \ref{fig:exampleCheckeringSolved}.

Now we will argue that if no spanning tree exists then no Hamiltonian cycle exists. If there are two disconnected components of the checkering then this means either there are disconnected pixels in which case either the graph itself is disconnected, the graph is not checkerable, or along all connecting vertices their checkerboard edges were assigned to the odd checkering. The graph must obviously be connected and by the prior parity argument it must be cherkerable for it to admit a double turning Hamiltonian cycle. This leaves the case where we have assigned edges in our checkering graph such that it is disconnected. To do so means we would have a path through the odd checkering which separates the two parts of our checkering graph. In the same way that a Hamiltonian path cannot cross edges in the checkerboard graph, it also cannot cross edges in the odd checkering. Thus we have a vertex cut with no paths passing through it, meaning we either have more than one cycle or miss some vertices in our path.
\end{proof}

Now we merely need to show that the checkering and its spanning tree can be found in linear time.

\begin{theorem}
\label{thm:doubleTurning}
The Double Turning Hamiltonian Cycle Problem in solid grid graphs can be solved in linear time.
\end{theorem}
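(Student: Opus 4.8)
The plan is to observe that Lemmas~\ref{lem:allFour} and~\ref{lem:doubleTree} have already reduced the decision problem to two purely structural tests: whether $G$ admits a valid checkering, and whether the associated checkering connectivity graph admits a spanning tree. Since a graph admits a spanning tree exactly when it is connected, the entire problem collapses to (i) verifying the boundary-parity condition that makes the graph checkerable, (ii) building the canonical checkering, and (iii) testing connectivity of the checkering graph. It then remains only to argue that each of these can be performed in time linear in the number $n$ of vertices.

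First I would fix the checkering canonically, as in the definition: color the pixels using the standard two-coloring of the integer grid, taking the class containing the upper-left pixel to be the checkerboard. Enumerating the pixels is a local scan---each unit square whose four corners and four edges are present in $G$ is a pixel---so all pixels and their colors are identified in $O(n)$ time, assuming a constant-time lookup from grid coordinates to vertices and pixel indices (e.g.\ a hash table keyed on coordinates). While scanning I would simultaneously check the parity condition on each boundary and verify that every vertex borders at least one checkerboard pixel; by the parity argument preceding the lemma these are exactly the conditions that make the checkering valid, and since each vertex and pixel is touched a constant number of times this step stays linear.

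Next I would construct the checkering connectivity graph explicitly. Its nodes are the checkerboard pixels, and by the analysis preceding Lemma~\ref{lem:doubleTree} each degree-4 vertex of $G$ contributes a single edge joining the two diagonally adjacent checkerboard pixels it separates. Since there are $O(n)$ pixels and $O(n)$ degree-4 vertices, this graph has linear size and can be assembled in $O(n)$ time. A single breadth-first or depth-first search then decides connectivity---equivalently, the existence of a spanning tree---in time linear in the size of this graph and hence in $n$. Combining the linear-time validity check with the linear-time connectivity check, and invoking Lemma~\ref{lem:doubleTree} for correctness, yields the claimed linear-time algorithm.

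The main obstacle is not algorithmic difficulty but bookkeeping: I must be careful that \emph{valid checkering} is captured by exactly the local conditions I test---connectedness, boundary parity, and every vertex touching a checkerboard pixel---and that the map from degree-4 vertices to checkering edges is set up so that the connectivity graph I build in linear time is precisely the one whose spanning tree Lemma~\ref{lem:doubleTree} requires. Once that correspondence is pinned down, the linear-time bound follows immediately from the locality of every operation and the linear size of the checkering graph.
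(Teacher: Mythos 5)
Your proposal is correct and follows essentially the same route as the paper: both invoke Lemma~\ref{lem:doubleTree} and then construct the checkering and decide existence of a spanning tree (equivalently, connectivity) by a linear-time graph search. The paper merely interleaves what you separate into phases---its single BFS from the top-left pixel simultaneously validates candidate pixels, builds the checkering, and yields the spanning tree, using a list of degree-3 boundary edges to verify that every vertex borders a checkered pixel---but these are implementation-level differences, not a different argument.
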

\begin{proof}
By Lemma~\ref{lem:doubleTree} we see that deciding if the graph is checkerable and finding a spanning tree of the checkering suffices. We assume we are given the graph embedding. First, we pick the left-most, top-most vertex in the graph and check for the 4-cycle defining the only pixel it is a part of. We construct a node in our checkerboard graph for this vertex and associate all four of the vertices around the pixel with it. Now we will perform a breath-first search over the checkered pixels in our graph. From each vertex of our current pixel, look at their exterior edges. If there are two, check whether it has been considered before and if not put that in a queue as a candidate checkerable pixel. If there is only one edge, we maintain a separate list of such degree 3 external edges. First, we check if that edge is in our list, if so we remove the edge and if not we add the edge to the list. Next, pop a candidate pixel off of the queue, verify that all four vertices are there, making it a valid pixel, and recursively check for its neighbors as before. If we ever discover a candidate pixel which is missing a vertex, then one of our properties is violated and we respond that there is no Hamiltonian cycle. Otherwise, we will have constructed a partial checkering and the bfs ordering will give us a spanning tree. Now, we need to take all edges from degree three vertices which need to be verified and check that both ends of the edge are vertices which belong to pixels in our checkering. We do the latter simply by checking whether our external degree 3 list is empty because every such edge will be added and then removed the two times it touches a valid pixel. If this is true, return that there exists a Hamiltonian cycle, and if not return false. Verifying each pixel and constructing a new node in our checkering takes constant time. Running our bfs touches each pixel, and thus each vertex in our graph a constant number of times. Each extra edge is touched twice. Thus the whole algorithm can be constructed to run in linear time.
\end{proof}

\section{Conclusion}

In this paper, we have shown that the HCPs in all of the eight semiregular tessellations are NP-complete and shown new upper and lower bounds on finding Hamiltonian paths which always turn in various grids. These generalizations we investigated lead to a large variety of open questions. Most of the restrictions from \cite{HCPtrihex} also apply to the semi-regular tessellation graphs and it would be interesting to know whether solid or super-thin versions of these graphs also admit polynomial time algorithms. We also leave open the questions of the complexity of double turning paths in square grid graphs. In addition, the dual graphs of the tessellation graphs are an obvious next target because of their regular structure and connection to discrete motion planning. One could also look at other general classes of tessellation graphs allowing more general shapes, including higher dimensional structures. We are also rather curious whether anything can be shown about finding Hamiltonian paths in aperiodic tessellation graphs.

There are also other interesting extensions of the always turning paths. The polynomial time proofs only hold for grids in the plane, however the arguments seem like they might lead to algorithms for grids on surfaces of bounded genus. It would be interesting to explore the question on square grids on a torus or other topologically distinct surfaces. In addition, the algorithm for finding double turning Hamiltonian cycles in solid square grids looks related to the number of spanning trees of certain types of graphs, as well as the potential removal of squares of edges. It would be interesting to know if it is computationally tractable to count the number of distinct double turning Hamiltonian cycles and whether it bears nice relation to other combinatorial problems. Finally, this notion of restricted turn paths can be applied to other grids or graphs with appropriate geometry.

\subparagraph*{Acknowledgments.}

We want to thank Erik Demaine for useful discussion and feedback on this research.

\bibliographystyle{alpha}
\bibliography{Tessilation}

\end{document}